\newcommand{\norm}[1]{\left\Vert#1\right\Vert}
\newcommand{\abs}[1]{\left\vert#1\right\vert}
\newcommand{\Set}[1]{\ensuremath{ \left\{ #1 \right\} }}
\newcommand{\set}[1]{\ensuremath{ \{ #1 \} }}
\renewcommand{\mid}{\,|\,}
\newcommand{\Mid}{\:\big | \:}
\begin{document}
 
\title{Conditional Preference Orders and their Numerical Representations} 

\author[a,1]{Samuel Drapeau}
\author[b,2]{Asgar Jamneshan}

\address[a]{SAIF/(CAFR) and Mathematics Departement, Shanghai Jiao Tong University, 211 Huaihai Road, Shanghai, China 200030}
\address[b]{Konstanz University, Universit\"atsstra\ss e 10, 78464 Konstanz, Germany}

\eMail[1]{sdrapeau@saif.sjtu.edu.cn}
\eMail[2]{asgar.jamneshan@uni-konstanz.de}


\abstract{
    We provide an axiomatic system modeling conditional preference orders which is based on conditional set theory.
    Conditional numerical representations are introduced, and a conditional version of the theorems of Debreu on the existence of numerical representations is proved.
    The conditionally continuous representations follow from a conditional version of Debreu's Gap Lemma the proof of which relies on a conditional version of the axiom of choice, free of any measurable selection argument.
    We give a conditional version of the von Neumann and Morgenstern representation as well as automatic conditional continuity results, and illustrate them by examples.
}
\keyWords{Conditional Preferences, Utility Theory, Gap Lemma, von Neumann and Morgenstern}


\date{\today}
\keyAMSClassification{91B06}
\keyJELClassification{C60, D81}
\maketitle

\section{Introduction}
In decision theory, the normative framework of preference ordering classically requires the completeness axiom.
Yet, there are good reasons to question completeness as famously pointed out by \citet{aumann01}:
\begin{quotation}
	Of all the axioms of utility theory, the completeness axiom is perhaps the most questionable.
	[\ldots] For example, certain decisions that an individual is asked to make might involve highly hypothetical situations, which he will never face in real life.
	He might feel that he cannot reach an ``honest'' decision in such cases.
    Other decision problems might be extremely complex, too complex for intuitive ``insight'', and our individual might prefer to make no decision at all in these problems.
    Is it ``rational'' to force decision in such cases?
\end{quotation}

Aumann's remark, supported by empirical evidence, triggered intensive research in terms of interpretation, axiomatization and representation of general incomplete preferences, see \citep{richter1966, peleg1970, bewley2001,dubra2002, dubra2004,eliaz2006,evren2011} and the references therein.
These authors consider incompleteness either as a result of status quo, see \citet{bewley2001}, or procedural decision making, see \citet{dubra2002}, and the numerical representations are in terms of multi-utilities.
However, Aumann's quote and a correspondence with Savage \citep{aumann1987}, where he exposes the idea of state-dependent preferences, suggest that the lack of information underlying a decision making is a natural source of incompleteness.
For instance, consider the simple situation where a person has to decide between visiting a museum or going for a walk on Sunday in one month from now.
She cannot express an unequivocal preference between these two prospective situations since it depends on the knowledge of uncertain factors like the weather, availability of an accompanying person, etc.
This information-based incompleteness suggests a contingent form of completeness.
For instance, conditioned on the event ``sunny and warm day'' she prefers a walk.
In this way, a complex decision problem, provided sufficient information, leads to an ``honest'' decision.
The present work suggests a framework formalizing this idea of a contingent decision making and its quantification.

Numerous quantification instruments in finance and economics entail a conditional dimension by mapping prospective outcomes to random variables such as for instance conditional and dynamic monetary risk measures  \citep{detlefsen01,cheridito01,ck09,penner02}, conditional expected utilities and certainty equivalents, dynamic assessment indices \citep{fritelli04, drapeau2014} or recursive utilities \citep{epstein02,epstein03}.
However, few papers address the axiomatization of conditional preferences underlying these conditional quantitative instruments.
In this direction is the work of \citet{luce1971} where an event-dependent preference ordering is considered and studied.
Their approach is further refined and extended in \citet{wakker1987} and \citet{karni1993, karni1993b}.
State-wise dependency is used in \citet{kreps02,kreps03} and \citet{marinacci02} to study intertemporal preferences and a dynamic version of preferences, respectively.
Remarkable is the abstract approach by \citet{skiadas01,skiadas02}.
He provides a set of axioms modeling conditional preferences on random variables which admit a conditional Savage representation of the form
\begin{equation*}
	U\left( x \right)=E_{Q}\left[ u\left( x \right) \mid \mathcal{A} \right]
\end{equation*}
where $\mathcal{A}$ is an algebra of events representing the information, $Q$ is a subjective probability measure and $u$ is a utility index.
As in the previous works, its decision-theoretical foundation consists of a whole family of total pre-orders $\succcurlyeq^A$, one for each event $A \in \mathcal{A}$, and a consistent aggregation property in order to obtain the conditional representation.
However, the decision maker is assumed to implicitly take into account a large number
\footnote{In a five steps binary tree, $4.294.967.296$ is the cardinality of the family of total pre-orders $\succcurlyeq^A$.} of complete pre-orders.

Our axiomatic approach differs in so far as it considers a \emph{single} but possibly \emph{incomplete} preference order $\succcurlyeq$ instead of a whole family of complete preference orders.
Even if one cannot a priori decide whether $x\succcurlyeq y$ or $y\succcurlyeq x$ for any two prospective outcomes, or acts, there may exist a contingent information $A$ conditioned on which $x$ is preferable to $y$.
In this case we formally write $x|A \succcurlyeq y|A$.
The set of contingent information is modeled as an algebra of events $\mathcal{A}=(\mathcal{A},\cap,\cup,{}^c,\emptyset,\Omega)$ of a state space $\Omega$.\footnote{Conditional set theory \citep{djkk2013} allows the contingent information to be any complete Boolean algebra.}
In order to describe the conditional nature of the preference, we require that $\succcurlyeq$ interacts consistently with the information, that is,
\begin{itemize}
    \item \emph{consistency:} if $x|A \succcurlyeq y|A$ and $B\subseteq A$, then $x| B \succcurlyeq y| B$;
    \item \emph{stability:} if $x|A \succcurlyeq y|A$ and $x|B \succcurlyeq y|B$, then $x|A\cup B\succcurlyeq y|A\cup B$;
    \item \emph{local completeness:} for every two acts $x$ and $y$ there exists a non-empty event $A$ such that either $x|A \succcurlyeq y|A$ or $x|A\succcurlyeq y|A$.
\end{itemize}
These assumptions bear a certain normative appeal in view of the conditional approach that we are aiming at.
In the context of the previous example, consistency says that if the person prefers a walk over a visit to a museum whenever it is ``sunny'' or ``warm'', then a fortiori she prefers a walk if it is ``sunny''.
Stability tells that if she prefers a walk whenever it is ``sunny'' or ``rainy'', then on any day where at least one of these conditions is met she will go for a walk.
In contrast to classical preferences, we only assume a local completeness:
For any two situations she is able to meet a decision provided enough -- possibly extremely precise\footnote{Indeed, the smaller the event, the more precise in which state of the world this event may occur. The most precise event being the singleton.} -- information.
In our example, there exists a rather unlikely, but still non-trivial, coincidence of the conditions \emph{`sunny'}, \emph{`humidity between 15 and 20\%'} and \emph{`wind between 0 and 10km/h'} under which she prefers a walk to the museum.
Unlike classical completeness, the information necessary to decide between two acts $x$ and $y$ depends on the pair $(x,y)$.
Note that if the set of contingent information reduces to the trivial information $\mathcal{A}=\set{\emptyset,\Omega}$, then, as expected, a conditional preference is a classical complete preference order.
In particular, classical decision theory is a special case of the conditional one.

Observe that our approach, as \citep{luce1971, wakker1987, kreps02, skiadas01, skiadas02}, considers an \emph{exogenously} given set of informations or events as the source of incomplete decision making.
Whereas in \citep{bewley2001,dubra2002} and the related subsequent literature on incomplete preference, the incompleteness and the resulting multi-valued representations yield an endogenous information about the nature of the incompleteness.
Incompleteness there is however not in terms of an algebra of events, and therefore not specifically related to a contingent decision making.

Our approach is also not a priori \emph{dynamic} in the sense that a single algebra of available information is given for the contingent decision making.
We do not address the question of progressive learning over time as new information reveals, resulting in an update of decisions.
This incremental learning approach in decision making is investigated by \citet{kreps02}, and recently by \citet{dillenberger2014} as well as \citet{piermont2015}.\footnote{Though, \citet{dillenberger2014} consider a static approach resulting in dynamic utility valuations that are deterministic.}
In these articles, the agent learns over time and may modify her behavior according to the new information as well as her previous choice making.
However, the underlying information structure is exogenously given -- either by a fixed dynamic structure by means of a filtration or a random tree, or by the filtration generated by the consumption paths, or even by the filtration generated by the previous preference orders.
Our approach may help in these cases by considering a sequence of conditional preference orders $\succcurlyeq_0,\succcurlyeq_1,\ldots, \succcurlyeq_t,\ldots$ with respect to an increasing sequence of algebra of events $\mathcal{A}_0\subseteq \mathcal{A}_1\subseteq \cdots \subseteq \mathcal{A}_t\subseteq \cdots$ each of which for every point in time.
We can provide an axiomatic system to describe these conditional preference orders $\succcurlyeq_t$ for each given time $t$, and derive a sequence of conditional numerical representations $U_t$.
Since we only address the case of a single information structure, that is, at a fixed given time $t$, we intentionally left out the following two questions in the dynamic context.
First, whether the decision making at time $t$ is influenced by the past information, that is a Markovian versus non-Markovian decision making.
Second, the impact at time $t$ of past and eventually future decisions.
In other terms, the interdependence structure over time of these preferences and the consequences for the dynamic utility representation\footnote{For instance, time consistency, Bellman principle, weaker time consistency, etc.} in terms of time consistency.\footnote{A topic of intensive study in mathematical finance, see \citep{cheridito01,ck09,penner02,cialenco2010,drapeau2014} among others.} 

Although being intuitive, it is mathematically not obvious what is meant by a contingent prospective act $x|A$.
The formalization of which corresponds to the notion of a conditional set, introduced recently by \citet{djkk2013}.
An heuristic introduction to conditional sets is given in Section \ref{sec:condset}.
For an exhaustive mathematical presentation we refer to \citep{djkk2013}.
The formalization and properties of conditional preferences are given in Section \ref{sec:preford}.
In Section \ref{sec:02}, we address the notion of conditional numerical representation and prove a conditional version of Debreu's existence result of continuous numerical representations.
While the proof technique differs, the classical statements in decision theory translate into the conditional framework.
For instance, a conditional version of the classical representation of \citet{neumann01} is presented in Section \ref{sec:condvnm}.
The representation of Debreu requires topological assumptions that often are not met in practice.
In Section \ref{sec:moncon}, we provide conditional results that allow to extend Debreu and Rader's theorem in a more general framework, and present automatic continuity results which allow to bypass topological assumptions.
We illustrate each of these cases by examples.
These results in their classical form rely on the Gap Lemma of \citet{debreu01,Debreu03} the conditional adaptation of which does not involve any measurable selection arguments but derives from a conditional version of the axiom of choice.
Section \ref{sec:gaptheorem} is dedicated to the formulation and the proof of this conditional Gap Lemma.
In Appendix \ref{appendix:01}, we gather some technical results and most of the proofs.


\section{Conditional Sets}\label{sec:condset}
As mentioned in the introduction, we model the contingent information, conditioned on which a decision maker ranks prospective outcomes, by an algebra of events $\mathcal{A}$.
For technical reasons, we assume that it is a $\sigma$-algebra with a probability measure defined on it.
The inclusion of two events is then to be understood in the almost sure sense.\footnote{In the theory of conditional sets, any complete algebra can be considered as a source of information and what follows also holds in this slightly more general framework. Though, from an economical point of view, most standard frameworks consider finite algebras of events or $\sigma$-algebras with a probability measure on it that describe the events of null measure, that is, those events that are considered as never occurring. A $\sigma$-algebra on which sets are identified if they coincide almost surely is complete, see \citep{halmos2009, djkk2013}. If one does not want to consider probability spaces, the Borel sets of a Polish space factorised by the sets of category one is a complete Boolean algebra.}
A set $X$ -- that in the present context describes acts -- is a \emph{conditional set} of $\mathcal{A}$ if it allows for conditioning actions $A :X \to X|A$ for each event $A \in \mathcal{A}$ which satisfy a consistency and an aggregation property:
\begin{itemize}
    \item[]\textbf{Consistency:} For any two acts $x,y\in X$ and events $A\subseteq B$, if the acts $x$ and $y$ coincide conditioned on $B$, that is, $x|B=y|B$, then they also coincide conditioned on $A$, that is, $x|A=y|A$.
    \item[]\textbf{Stability:} For any two acts $x,y\in X$ and event $A\in\mathcal{A}$, there exists an act $z\in X$ such that $z$ coincides with $x$ conditioned on $A$ and with $y$ otherwise.
        We denote this element $z=x|A+y|A^c$.\footnote{Since we assume that $\mathcal{A}$ can be made complete, the concatenation property is required for any partition of events $(A_i)\subseteq \mathcal{A}$ and family of acts $(x_i)\subseteq X$, and we denote $x=\sum x_i|A_i$ the unique element such that $x$ coincides with $x_i$ conditioned on $A_i$.}
\end{itemize}
Intuitively, the action $x \mapsto x|A$ tells how acts are conditioned on the information $A$ and $X|A$ represents the acts in $x$ conditioned to $A$.
\begin{example}\label{exep:running}
    Following the example from the introduction, there are two unconditional alternatives
    \begin{equation*}
        x=\textit{`going for a walk'} \quad \text{and} \quad y=\textit{`going to the museum'},
    \end{equation*}
    and the information is reduced to a single condition $A=$\emph{`sunny'} which yields the algebra
    \begin{equation*}
        \mathcal{A}=\set{0,A,A^c,\Omega}=\set{\text{`no information'}, \text{`sunny'}, \text{`not sunny'}, \text{`full information'}}.
    \end{equation*}
    The corresponding conditional set of acts is then given by
    \begin{equation*}
        X=\Set{x, y, x|A+y|A^c,y|A+x|A^c}.
    \end{equation*}
    For instance, the act $x|A +y|A^c$ stays for going for a walk provided it is sunny and going to the museum otherwise.
\end{example}
\begin{example}\label{exep:condQ}
    The conditional rational numbers are defined as follows:
    given two rational numbers $q_1,q_2 \in \mathbb{Q}$ and an event $A \in \mathcal{A}$, let $q:=q_1 |A + q_2 |A^c$ be the conditional rational number which is $q_1$ conditioned on $A$ and $q_2$ otherwise.\footnote{More generally, given a partition of events $(A_n)\subseteq \mathcal{A}$ and a corresponding family of rationals $(q_n)\subseteq \mathbb{Q}$, define the conditional rational number $q:=\sum  q_n |A_n$ as the conditional element which has the value $q_n$ conditioned on $A_n$.}
    The set of conditional rational numbers, denoted by $\mathbf{Q}$, is a conditional set where the conditioning action is given by $q|B = q_1|A\cap B+q_2|A^c\cap B \in \mathbf{Q}|B$.
    The conditional natural numbers $\mathbf{N}$ are defined analogously.
    In analogy to the next example, $\mathbf{N}$ and $\mathbf{Q}$ correspond to the set of random variables with rational values and natural values, respectively.
\end{example}
\begin{example}\label{exep:03bis}
    Another example is the collection 
    \begin{equation*}
        L^0(\mathcal{A})=\{x:\Omega \to \mathbb{R}\text{ such that } x\text{ is }\mathcal{A}\text{-measurable}\}  
    \end{equation*}
    of random variables.
    Given a random variable $x$ and an event $A$, the conditioning of $x$ on $A$ is the restriction $x|A:A\to \mathbb{R}$, $\omega \mapsto x|A (\omega):=x(\omega)$ for $\omega \in A$.
    For any two random variables $x,y$ and an event $A$, then $z=x|A+y|A^c$ corresponds to the random variable $x1_A+y1_{A^c}$ where $1_A$ is the indicator function of the event $A$.

    In many cases, the information algebra $\mathcal{A}$ describes the exogenous information which is however only partially available to the agent for a decision making.
    For instance, the information available tomorrow to decide about random outcomes that are due in a year from now and depend on the whole information during that year.
    This can be modelled as follows.
    Given another algebra $\mathcal{B}$ with a probability measure on it and such that $\mathcal{A}\subseteq \mathcal{B}$, we can define
    \begin{equation*}
        L^1(\mathcal{B}):=\left\{ x:\Omega \to \mathbb{R}\text{ such that }x\text{ is } \mathcal{B}\text{-measurable with } E\left[ \abs{x}\mid \mathcal{A}\right] <\infty\right\}.
    \end{equation*}
    as the set of $\mathcal{B}$-measurable random variables with finite conditional expectation with respect to tomorrow's information $\mathcal{A}$.\footnote{It is in fact an $L^0$-module as studied and introduced in \citep{fillipovic2012,kupper03}}
    Inspection shows that it defines a conditional set of $\mathcal{A}$ when considering the restrictions $x|A$ for events $A$ which are in the smaller algebra $\mathcal{A}$.
\end{example}
\begin{example}\label{exep:03}
    As it concerns decision theory, lotteries -- or probability distributions -- are often used as objects for decision making.
    We define
    \begin{equation*}
        P(\mathcal{A}):=\left\{ \mu:\Omega \to P\text{ such that } \mu\text{ is }\mathcal{A}\text{-measurable} \right\}
    \end{equation*}
    where $P$ is a set of lotteries.
    A conditional lottery can be seen as a state-dependent lottery providing for each state $\omega$ a lottery $\mu(\omega,dx)$.
    Throughout, we denote by $P(\mathcal{A})$ the set of state-dependent measurable lotteries.
    Likewise random variables, it defines a conditional set where $\mu|A$ is the conditional lottery restricted to the event $A$ and for every two conditional lotteries $\mu,\nu$ and event $A$, the conditional lottery $\eta=\mu|A+\nu|A^c$ corresponds to the conditional lottery $\mu 1_A+ \nu 1_{A^c}$.

    Another typical object are Anscombe-Aumann acts which are extensions of lotteries.
    Actually the conditional set of conditional lotteries $P(\mathcal{A})$ already represents, strictly speaking, Anscombe-Aumann acts.
    However, in our context, the decision making is contingent and therefore realised with respect to the available information $\mathcal{A}$.
    In the present form, an Anscombe-Aumann act is a state-dependent lottery but measurable with respect to a larger algebra of events $\mathcal{B}$ on which the decision maker cannot make an honest decision.
    Just as $L^1(\mathcal{B})$,  the conditional set of conditional Anscombe-Aumann acts is defined as
    \begin{equation*}
        P(\mathcal{B}):=\left\{ \mu:\Omega \to P\text{ such that } \mu\text{ is } \mathcal{B}\text{-measurable} \right\}.
    \end{equation*}
\end{example}

The relation between conditional sets is described by the \emph{conditional inclusion} which is characterized by two dimensions, a classical inclusion and a conditioning:
\begin{itemize}
    \item On the one hand, every non-empty set $Y\subseteq X$ which is \emph{stable}, that is, $x|A+ y|A^c\in Y$ for every $x,y \in Y$ and $A \in \mathcal{A}$, is a conditional subset of $X$.
    \item On the other hand, $X|A$ is a conditional set but on the relative algebra $\mathcal{A}_A:=\set{B\cap A\colon B\in \mathcal{A}}$ and a subset of $X$ conditioned on $A$.
\end{itemize}
Combining the two dimensions, a conditional set $Y$ is said to be \emph{conditionally included} in $X$, denoted $Y\sqsubseteq X$, if $Y=Z|A$ for some stable $Z\subseteq X$ and a condition $A \in \mathcal{A}$.
In that case, we say that $Y$ is a conditional set \emph{``living'' on} $A$ and if we want to emphasize the condition on which this conditional set lives, we denote it $Y|A$.
The conditional inclusion is illustrated in Figure \ref{pic:inclusion}.
If $A=\emptyset$, then $X|\emptyset$ lives nowhere and in particular is conditionally contained in any conditional set, and thus is conditionally the emptyset.
The \emph{conditional powerset} 
\begin{equation*}
    \mathcal{P}(X):=\Set{Y\colon Y\sqsubseteq X}=\left\{ Y\colon Y=Z|A\text{ for some event }A\in \mathcal{A}\text{ and a stable set }Z\subseteq X \right\}
\end{equation*}
consists of the collection of all conditional subsets of $X$. 
\begin{example}
    In Example \ref{exep:running}, the set $\set{x,y}\subseteq X$ is not stable since $x|A+y|A^c \not \in \set{x,y}$.
    Hence $\set{x,y}$ is not a conditional set whereas $Z:=\set{x; y|A+ x|A^c}$ is stable, and therefore a conditional subset of $X$ living on $\Omega$.
    However, $Y:=\set{x|A, y|A}$ is a conditional subset of $X$ living on $A$.
    Indeed, $Y|A=Z|A$.
\end{example}
The conditional intersection of two conditional sets $Y,Z$ is the intersection on the largest condition $A^\ast$ on which $Y$ and $Z$ have a non-empty classical intersection as illustrated in Figure \ref{pic:intersection}.
\begin{figure}[h]
    \centering
    \begin{subfigure}[t]{0.4\textwidth}
        \includegraphics[width=\textwidth]{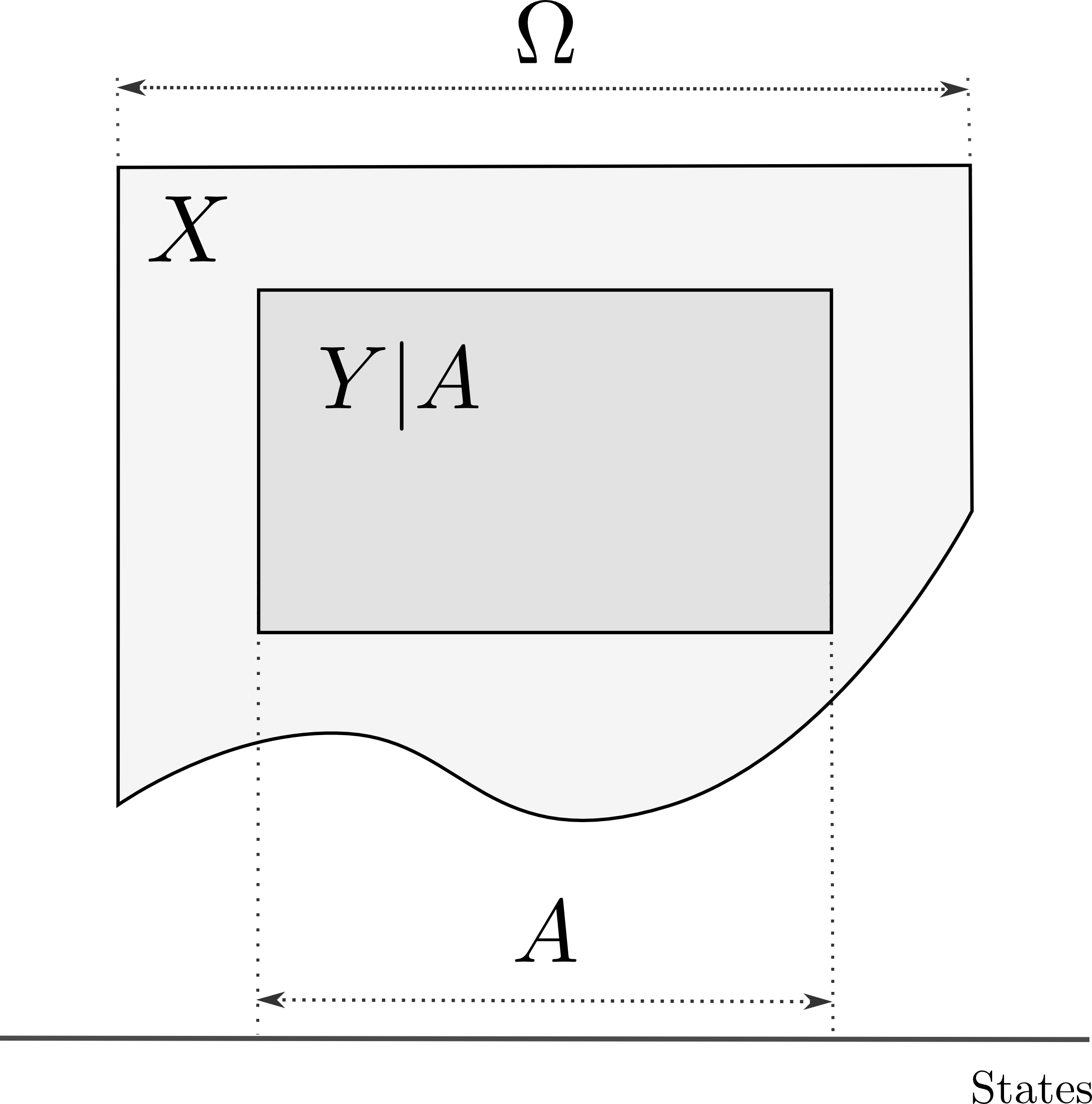}
        \caption{Illustration of the conditional inclusion.}
        \label{pic:inclusion}
    \end{subfigure}
    \qquad
    \begin{subfigure}[t]{0.4\textwidth}
        \includegraphics[width=\textwidth]{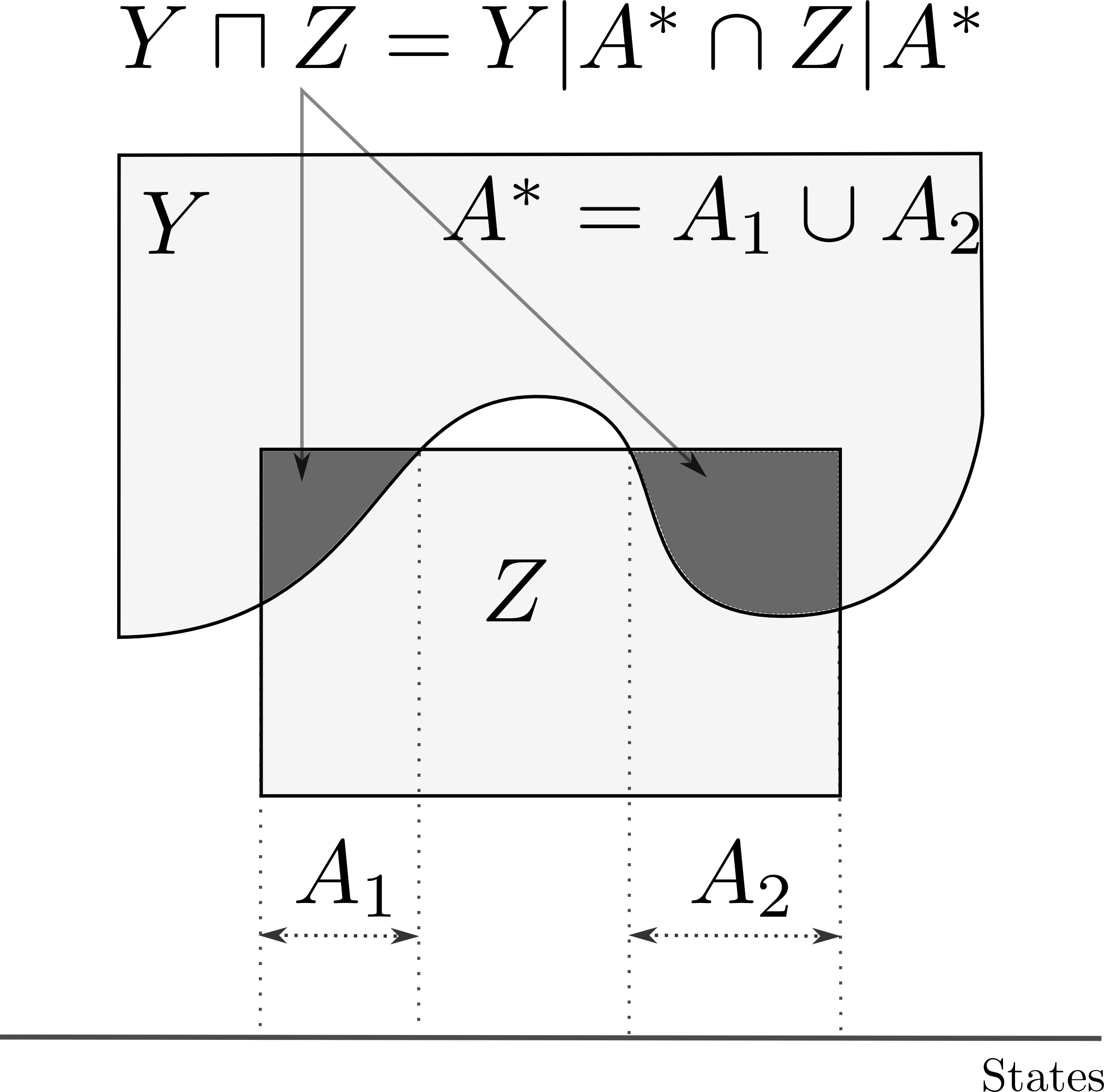}
        \caption{Illustration of the conditional intersection.}
        \label{pic:intersection}
    \end{subfigure}
\end{figure}
The conditional union of two conditional subsets $Y,Z$ is the collection of all elements which can be concatenated such that each piece of the concatenation conditionally falls either in $Y$ or in $Z$. 
The conditional union is defined by
\begin{align*}
    Y\sqcup Z:= \set{y|A+ z |B: y|A\in Y|A, z|B\in Z|B \text{ and } A\cap B=\emptyset}
\end{align*}
and illustrated in Figure \ref{pic:union}.
Finally, the conditional complement $Y^\sqsubset$ of a conditional subset $Y$ is the collection of all those elements $y$ which nowhere fall into $Y$, as illustrated in Figure \ref{pic:complement}.
\begin{figure}[h]
    \centering
    \begin{subfigure}[t]{0.4\textwidth}
        \includegraphics[width=\textwidth]{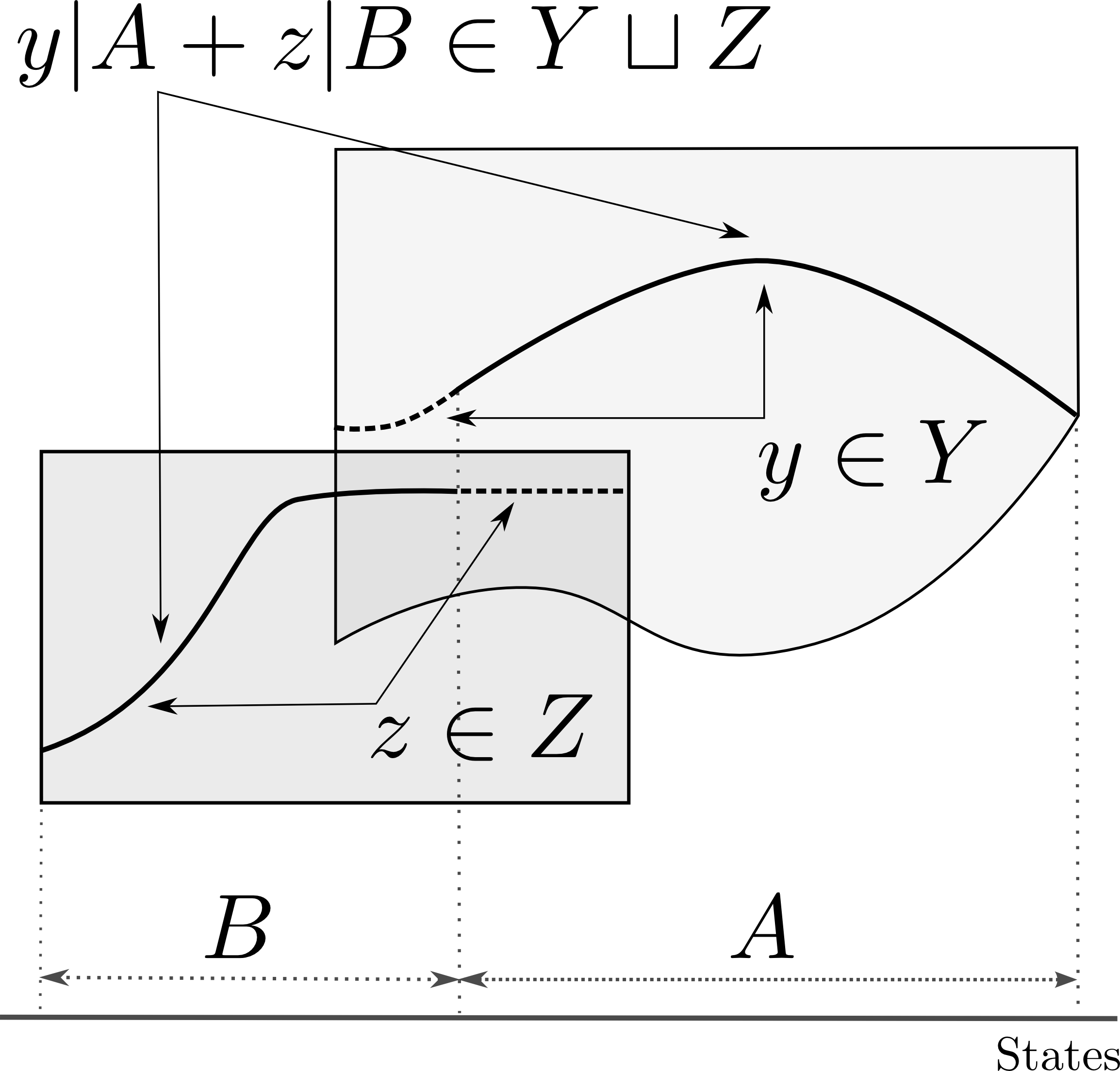}
        \caption{Illustration of the conditional union.}
        \label{pic:union}
    \end{subfigure}
    \qquad
    \begin{subfigure}[t]{0.4\textwidth}
        \includegraphics[width=\textwidth]{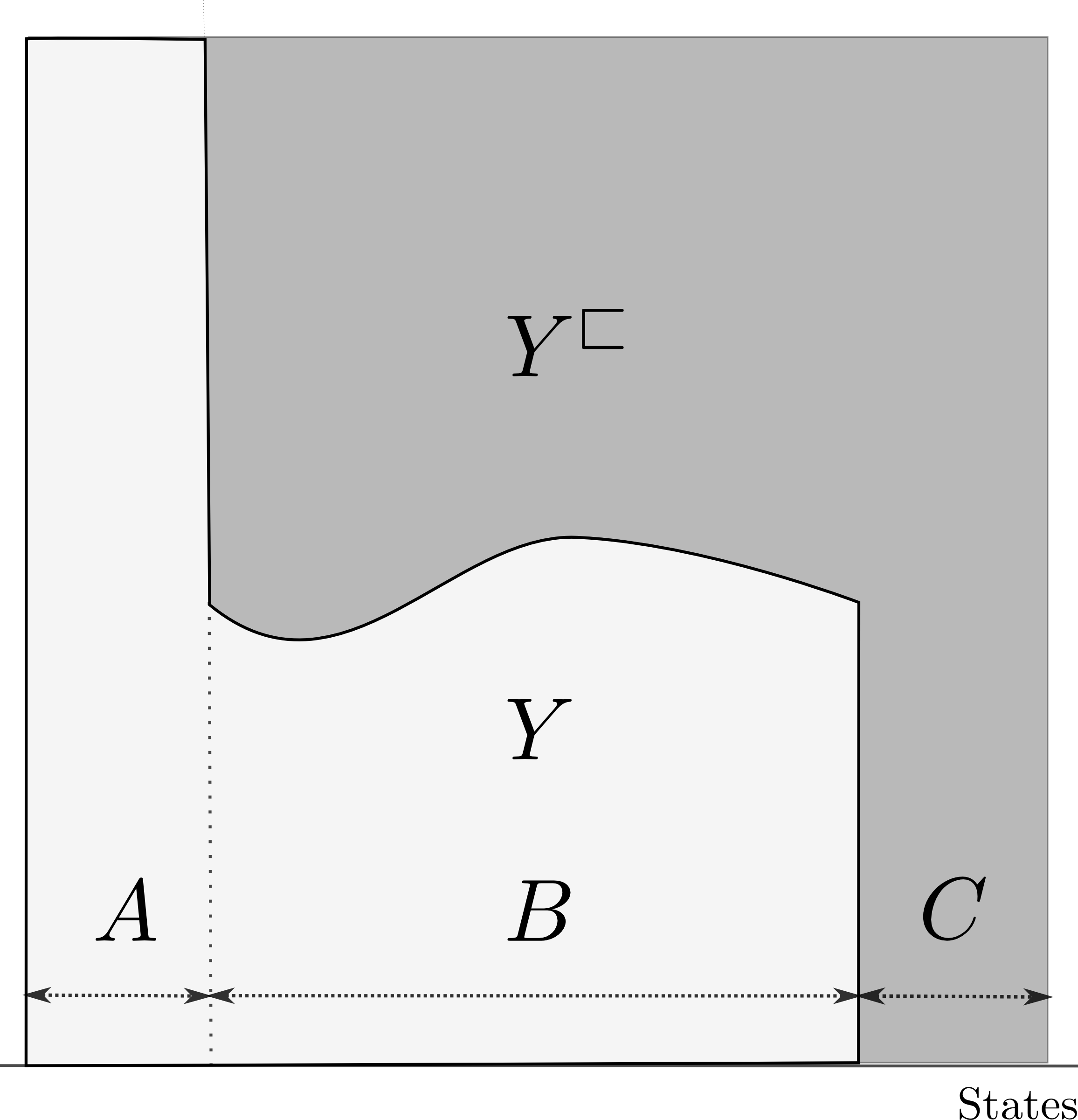}
        \caption{Illustration of the conditional complement.}
        \label{pic:complement}
    \end{subfigure}
\end{figure}

A main result in \cite{djkk2013} is that the conditional powerset together with these operations forms a complete Boolean algebra
\begin{equation*}
    \left(\mathcal{P}(X), \sqcup, \sqcap, {}^\sqsubset, X|\emptyset, X\right).
\end{equation*}
Following the classical constructions, the conditional powerset allows to define conditional relations, and functions, and topologies, and other conditional structures, see \citep{djkk2013}.

\section{Conditional Preference Orders}\label{sec:preford}
For the remainder of the paper $X$ denotes a conditional set.
A conditional binary relation $\succcurlyeq$ is a conditional subset $G\sqsubseteq X\times X$ living on $\Omega$ and we write $x\succcurlyeq y$ if and only if $(x,y) \in G$.
In particular, a conditional binary relation is at first a classical binary relation.
However, due to the fact that the graph $G$ is a conditional set and writing $x|A \succcurlyeq y|A$ for $(x|A,y|A) \in G|A$, the following additional properties hold
\begin{itemize}
    \item \emph{consistency:} if $x|A \succcurlyeq y|A$ and $B\subseteq A$, then $x|B \succcurlyeq y|B$;
    \item \emph{stability:} if $x|A \succcurlyeq y|A$ and $x|B \succcurlyeq y|B$, then $ x|A\cup B\succcurlyeq y|A\cup B$;
\end{itemize}
corresponding to two of the normative properties mentioned in the introduction.
Given a conditional binary relation, $\sim$ denotes the symmetric part of the binary relation and we use the notation 
\begin{equation*}
    x\succ y \quad \text{if and only if}\quad x\succcurlyeq y \text{ and } y|A\not\succcurlyeq x|A\text{ for every non-empty event }A\in \mathcal{A}.
\end{equation*}
In other words, $x\succ y$ means that $x$ is strictly preferred to $y$ on any non-empty condition.
Both $\sim$ and $\succ$ are conditional binary relations.
\begin{definition}
    A conditional binary relation $\succcurlyeq$ on $X$ is called a \emph{conditional preference order} if $\succcurlyeq$ is 
    \begin{itemize}
        \item \emph{reflexive:} $x\succcurlyeq x$ for every $x$;
        \item \emph{transitive:} From $x\succcurlyeq y$ and $y\succcurlyeq z$ it follows that $x\succcurlyeq z$;
        \item \emph{locally complete:} for every $x \not \sim y$ there exists a non-empty event $A$ such that either $x|A \succ y|A$ or $y|A\succ x|A$.
    \end{itemize}
\end{definition}
Although a conditional preference is not total, the following lemma shows that local completeness allows to derive for every two elements a partition on which a comparison can be achieved.
\begin{lemma}\label{lem:partition}
    Let $\succcurlyeq$ be a conditional preference order on $X$ and $x,y \in X$.
    There is a pairewise disjoint family of conditions $A,B,C$ such that $A\cup B\cup C=\Omega$ and
    \begin{equation*}
        x|A \sim y|A,\quad x|B\succ y|B\quad\text{and}\quad y|C\succ x|C.
    \end{equation*}
\end{lemma}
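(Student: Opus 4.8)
The plan is to realize $A$, $B$, $C$ as the \emph{largest} events on which, respectively, indifference holds, $x$ is strictly preferred, and $y$ is strictly preferred, and then to use local completeness to show that nothing is left uncovered. First I would introduce the three families of events
\begin{equation*}
    \mathcal{I}=\set{D\in\mathcal{A}: x|D\sim y|D},\quad \mathcal{S}_x=\set{D\in\mathcal{A}: x|D\succ y|D},\quad \mathcal{S}_y=\set{D\in\mathcal{A}: y|D\succ x|D}.
\end{equation*}
Each is closed under passing to sub-events, by consistency of $\succcurlyeq$, $\sim$ and $\succ$. Using stability I would then check that each is closed under countable unions: for $\mathcal{I}$ this is immediate from stability applied in both directions, whereas for $\mathcal{S}_x$ (and symmetrically $\mathcal{S}_y$) one argues by contradiction -- if some non-null $E\subseteq\bigcup_n D_n$ satisfied $y|E\succcurlyeq x|E$, then $E\cap D_n$ would be non-null for some $n$, and consistency would give $y|(E\cap D_n)\succcurlyeq x|(E\cap D_n)$, contradicting $x|D_n\succ y|D_n$.

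Next I would invoke the completeness of the probability algebra $\mathcal{A}$: the essential supremum of any family of events is attained along a countable subfamily, so closure under countable unions yields $A:=\esssup\mathcal{I}\in\mathcal{I}$, $B:=\esssup\mathcal{S}_x\in\mathcal{S}_x$ and $C:=\esssup\mathcal{S}_y\in\mathcal{S}_y$. Pairwise disjointness then follows from the definition of $\succ$ together with consistency: on $A\cap B$ one would have simultaneously $y|(A\cap B)\succcurlyeq x|(A\cap B)$, coming from $x|A\sim y|A$, and $y|(A\cap B)\not\succcurlyeq x|(A\cap B)$, coming from $x|B\succ y|B$, forcing $A\cap B=\emptyset$; the cases $A\cap C$ and $B\cap C$ are handled identically.

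The crux is to show $A\cup B\cup C=\Omega$. Setting $D:=(A\cup B\cup C)^c$ and assuming $D\neq\emptyset$, I would apply local completeness not to $x,y$ directly but to the concatenations $x':=x|D+y|D^c$ and $y':=y$, which coincide off $D$. If $x'\sim y'$, consistency gives $x|D\sim y|D$, so $D\in\mathcal{I}$ and hence $D\subseteq A$, contradicting $D\cap A=\emptyset$. Otherwise $x'\not\sim y'$, and local completeness furnishes a non-empty $F$ with $x'|F\succ y'|F$ or $y'|F\succ x'|F$; since $x'$ and $y'$ agree on $D^c$, the defining clause of $\succ$ forces $F\subseteq D$, so on $F$ the relation reads $x|F\succ y|F$ or $y|F\succ x|F$, placing the non-empty $F$ into $\mathcal{S}_x$ or $\mathcal{S}_y$ and contradicting its disjointness from $B\cup C$. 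I expect this covering step -- in particular choosing the auxiliary acts so that local completeness necessarily returns an event inside $D$ -- to be the main obstacle, while the closure, attainment and disjointness steps are routine consequences of consistency, stability, and the completeness of $\mathcal{A}$.
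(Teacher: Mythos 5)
Your proof is correct and follows essentially the same route as the paper's: you take $A,B,C$ to be the maximal events of indifference and strict preference (the paper writes them directly as unions over the corresponding families, relying on completeness of $\mathcal{A}$ where you spell out the countable-union closure), and your covering step via the concatenated act $x'=x|D+y|D^c$ compared against $y$ is the mirror image of the paper's comparison of $x$ against $\tilde y=x|{(A\cup B\cup C)}+y|{(A\cup B\cup C)}^c$, with the same use of reflexivity to force the event produced by local completeness into the uncovered region.
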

\begin{proof}
    Let $x,y \in X$ and define\footnote{Recall that we assume that $\mathcal{A}$ is an algebra that can be factorized in such a way that it is complete with respect to the formations of unions and intersections.}
    \begin{equation*}
        A =\cup \set{\tilde{A} \in \mathcal{A}\colon x|\tilde{A}\sim y|\tilde{A}}, \quad B  =\cup \set{\tilde{B} \in \mathcal{A}\colon x|\tilde{b}\succ y|\tilde{B}}\quad \text{and}\quad C =\cup \set{\tilde{C} \in \mathcal{A}: \tilde{C}y\succ \tilde{C}x}
    \end{equation*}
    which are the largest conditions on which $x$ is conditionally equivalent, strictly better or worse than $y$, respectively.
    Due to the consistency property of conditional relations, it follows that these conditions are mutually disjoint.
    For the sake of contradiction, suppose that $D:=A\cup B\cup C \neq \Omega$.
    It follows that outside $D$, that is, conditioned on $D^c$, the element $x$ is nowhere either equivalent, strictly better or worse than $y$.
    Otherwise, this contradicts the definition of $A$, $B$ and $C$.
    Define $\tilde{y}=x|D+y|D^c$ being $x$ conditioned on $D$ and $y$ conditioned on $D^c$.
    Since $x\not \sim y$ and $\sim$ is a conditional equivalence relation, it follows that $x \not \sim \tilde{y}$, otherwise $x|D^c\sim y|D^c$ contradicting the definition of $A$.
    By local completeness, there exists a non-empty condition $E$ such that either $x|E\succ \tilde{y}|E$ or $\tilde{y}|E\succ x|E$.
    Without loss of generality, suppose that $x|E \succ \tilde{y}|E$.
    Since $\tilde{y}|D=x|D \sim x|D$ by reflexivity and consistency, it follows that $E$ is disjoint from $D$, in other words $E\subseteq D^c$.
    In particular, $\tilde{y}|E=y|E$ implying that $x|E\succ y|E$, which together with $\emptyset\neq E\subseteq D^c$ contradict the maximality of $B$.
    Thus $D=\Omega$ which ends the proof.
\end{proof}
\begin{example}\label{example011}
    Let us give a complete formal description of the example in the introduction.
    Recall from Example \ref{exep:running} that $\mathcal{A}=\set{\text{`no information'}, \text{`sunny'}, \text{`not sunny'}, \text{`full information'}}=\{\emptyset, A, A^c,\Omega\}$, and that the conditional set generated from the two unconditional choices $x=$`going for a walk' and $y=$`going to the museum' is given by $X=\Set{x, y, x|A+y|A^c,y|A+x|A^c}$.
    The conditional preference being reflexive, it trivially holds 
    \begin{align*}
        x&\succcurlyeq x,& y&\succcurlyeq y, &x|A+ y|A^c&\succcurlyeq x|A+ y|A^c,& y|A+ x|A^c&\succcurlyeq y|A+ x|A^c.
    \end{align*}
    Further, the individual prefers going for a walk if it is sunny and to the museum otherwise.
    This translates into 
    \begin{align*}
        x|A&\succ y|A  &&\text{and}& y|A^c &\succ  x|A^c.
    \end{align*}
    Since the preference is assumed to be conditional, it also holds
    \begin{align*}
        x|A+ y|A^c&\succcurlyeq y|A+x|A^c, & x&\succcurlyeq y|A+ x|A^c ,&x|A+y|A^c&\succcurlyeq y.
    \end{align*}
    For instance, the relation $x\succcurlyeq y|A+ x|A^c$ states that going for a walk is in any case better than going to the museum if it is sunny and going for a walk otherwise.
    Inspection shows that the conditional preference is indeed a transitive and reflexive conditional relation.
    As mentioned, this relation does not tell whether $x$ is preferred to $y$, that is, whether she wants to go for a walk or to the museum.
    There exists however a condition $A=$\emph{`sunny'}, such that $x|A \succcurlyeq y|A$ which shows that it is locally complete.
    In particular, the partitioning given in Lemma \ref{lem:partition} corresponds to
    \begin{align*}
        x|\emptyset &\sim y|\emptyset, &x|A&\succ y|A,& y|A^c&\succ x|A^c.
    \end{align*}

    Note also that conditional sets allow to solve the following puzzle:
    Define $Y =\set{ z\succ y}$, the set of elements which are strictly preferred to $y$.
    In a classical setting this set is empty.
    Indeed, there exists no alternative which is strictly preferred to going to the museum since conditioned on $A^c$, $y$ is maximal for the preference order.
    However, as our intuition suggests, this set should not be empty and indeed it is conditionally non-empty since $Y=\set{x|A}$.
    The importance of this fact is observed in the proof of Debreu's Theorem \ref{thm:numrep01} with the definitions of $Z^\pm$ towards the construction of a conditional numerical representation.
\end{example}
\begin{example}\label{exep03}
    In the framework of Example \ref{exep:03bis} of $\mathcal{A}$-measurable random variables, the natural partial order on $L^0(\mathcal{A})$ given by $x\geqslant y$ if and only if $x(\omega)\geq y(\omega)$ for almost all $\omega$ is an example of a conditional preference order.
    Indeed, it is consistent since if the random variable $x$ is greater than the random variable $y$ on an event $A$, then it is also the case on any event $B\subseteq A$.
    Likewise it is stable, reflexive and transitive, even asymmetric and therefore a conditional partial order.
    However, it is also locally complete.
    Indeed, if $x\neq y$, then it follows immediately that either the event $B:=\{\omega\colon x(\omega)>y(\omega)\}$ or the event $C:=\{\omega\colon x(\omega)<y(\omega)\}$ is non-empty.
    Actually, defining the events $A=\{\omega\colon x(\omega)=y(\omega)\}$, $B=\{\omega\colon x(\omega)>y(\omega)\}$ and $C:=\{\omega\colon x(\omega)<y(\omega)\}$ provide the partition of Lemma \ref{lem:partition}.

    Since the conditional rational numbers $\mathbf{Q}$ coincide with a subset of $\mathcal{A}$-measurable random variables, the same conditional total ordering in the almost sure sense can be defined on them.
\end{example}
\begin{remark}
    In general, a conditional preference can be an equivalence relation conditioned on $A$ and strictly non-trivial on $A^c$.
    However, the case of interest lives on $A^c$.
    Therefore, throughout this paper, we assume that a conditional preference is conditionally non-trivial, that is, there exists a pair $x,y \in X$ such that $x\succ y$
\end{remark}


\section{Conditional Numerical Representations}\label{sec:02}
Next we address the quantification of such a conditional ranking.
First, we need the notion of a \emph{conditional function}.
A conditional function $f:X\to Y$ between two conditional sets is a classical function with the additional property of stability:
\begin{equation*}
    f\left(x|A+y|A^c\right)=f(x)|A+f(y)|A^c.
\end{equation*}
\begin{example} 
    The $\mathcal{A}$-conditional expectation of elements of $L^1(\mathcal{B})$ introduced in Example \ref{exep:03bis}, is a conditional function.
    Indeed, for every $x,y \in L^1(\mathcal{B})$ and each $A \in \mathcal{A}$ it holds
    \begin{align*}
        f\left( x|A+y|A \right):=E\left[ 1_A x+1_{A^c}y \mid \mathcal{A} \right]=1_AE\left[x \mid\mathcal{A} \right]+1_{A^c}E\left[ y\mid \mathcal{A} \right]=f(x)|A+f(y)|A^c
    \end{align*}
    since $1_A$ is $\mathcal{A}$-measurable.
\end{example}
\begin{example}
    For $q=q_1|A+q_2|A^c$ and $r=r_1|B+ r_2|B^c$, define the conditional addition and conditional absolute value on $\mathbf{Q}$ as
    \begin{equation*}
        q+r:=
        \begin{cases}
            q_1+r_1 &\text{ on }A  \cap B\\
            q_1+r_2 &\text{ on }A  \cap B^c\\
            q_2+r_1 &\text{ on }A^c\cap B\\
            q_2+r_2 &\text{ on }A^c\cap B^c
        \end{cases}\quad \text{and}\quad \abs{q}:= \abs{q_1}|A+\abs{q_2}|A^c.
    \end{equation*}
    Together with an analogous definition for conditional multiplication these operations make $\mathbf{Q}$ a conditional totally ordered field as defined in \citep{djkk2013}.

    In particular, this allows to define on $\mathbf{Q}$ the conditional variant of the Euclidean topology on $\mathbb{Q}$ by the conditional balls
    \begin{equation*}
        B_{r}(q):=\Set{p \in \mathbf{Q}: \abs{q-p}\leqslant r}, 
    \end{equation*}
    for $q \in \mathbf{Q}$ and $r \in \mathbf{Q}_{++}:=\Set{p \in \mathbf{Q}: p>0}$.
    It behaves like the standard topology on $\mathbb{Q}$ with the additional local property:
    \begin{equation*}
        B_{r_1}(q)|A+B_{r_2}(p)|A^c:=B_{r_1|A+r_2|A^c}(q|A+p|A), 
    \end{equation*}
    for every $r_1,r_2 \in \mathbf{Q}_{++}$ and $q,p \in \mathbf{Q}_{++}$.
    In other words, a conditional neighborhood of $3$ conditioned on $A$ and $2/5$ on $A^c$ is itself a conditional neighborhood of the conditional rational $3|A+2/5|A^c$.

\end{example}

For the quantification, we secondly need a conditional analogue of the real line which allows to represent the conditional preferences.
The conditional real numbers, denoted by $\mathbf{R}$, are obtained from the conditional rational numbers by adapting Cantor's construction, that is, identifying conditional Cauchy sequences in $\mathbf{Q}$.
As in the standard theory, the conditional real numbers can be characterized as a conditional field where every bounded subset has an infimum and a supremum and which is topologically conditionally separable.
In particular, $\mathbf{Q}$ is conditionally dense in $\mathbf{R}$.
\begin{remark}
    In our context of an algebra of events, the conditional real line $\mathbf{R}$ corresponds exactly to the conditional set of random variables $L^0(\mathcal{A})$ endowed with the $L^0$-topology introduced in \citep{kupper03}, as shown in \citep{djkk2013}.
    Therefore, in the following, the reader may always think of the conditional real line as being the set of $\mathcal{A}$-measurable random variables.
    In particular, conditional numerical representations map conditional preferences to the almost sure order between $\mathcal{A}$-measurable random variables.
\end{remark}

\begin{definition}
    A \emph{conditional numerical representation} of a conditional preference order $\succcurlyeq$ on $X$ is a conditional function $U:X \to \mathbf{R}$ such that
    \begin{equation}\label{def:relation}
        x \succcurlyeq y\quad \text{if and only if}\quad U(x)\geqslant U(y).
    \end{equation}
\end{definition}
Note that every conditional function $U:X \to \mathbf{R}$ defines a conditional preference order by means of \eqref{def:relation}.
Furthermore, if $U:X\to \mathbf{R}$ is a conditional numerical representation, then $\varphi\circ U$ is a conditional numerical representation for every conditionally strictly increasing function $\varphi:\mathbf{R}\to \mathbf{R}$.
\begin{remark}
    The \emph{conditional entropic monetary utility function} studied for instance in \citep{fritelli04} as a special case of a conditional certainty equivalent and given by
    \begin{equation*}
        U(x)=\ln\left( E\left[ e^{x}\Mid \mathcal{A} \right] \right) ,\quad x \in L^1 \left( \mathcal{B}\right),
    \end{equation*}
    is a representation of a conditional preference.
    Indeed, this function is local since for every $A \in \mathcal{A}$ it holds
    \begin{multline*}
        U(x|A+y|A^c)=\ln\left( E\left[ e^{1_Ax+1_{A^c}y}\Mid \mathcal{A} \right] \right)=\ln\left( 1_A E\left[ e^{x}\Mid \mathcal{A} \right]+ 1_{A^c}E\left[ e^{y}\Mid \mathcal{A} \right]\right)\\
        =1_A\ln\left( E\left[ e^{x}\Mid \mathcal{A} \right] \right)+1_{A^c}\ln\left( E\left[ e^{y}\Mid \mathcal{A} \right] \right)=U(x)|A+U(y)|A^c.
    \end{multline*}
    The same argumentation holds for all conditional certainty equivalents, conditional/dynamic risk measures or acceptability indices mentioned in the introduction.
\end{remark}
Given a conditional preference order, we address the necessary and sufficient conditions under which a conditional numerical representation exists.
The first result is a conditional version of Debreu's statement in \citep{debreu01} and necessitates the notion of conditionally order dense.
A conditional subset $Z \sqsubseteq X$ is \emph{conditionally order dense} if for every $x,y \in X$ with $x\succ y$, there exists $z \in Z$ such that $x\succcurlyeq z\succcurlyeq y$.
The case of interest is when $Z$ is conditionally countable, that is, there exists a conditional injection $\varphi: Z \to \mathbf{Q}$.
Equivalently, $Z$ is conditionally countable if it is a conditional sequence $Z=(z_n)_{n\in \mathbf{N}}$ where $\mathbf{N}$ is the conditional natural numbers.
There exists a difference between a conditional sequence and a standard sequence:
Analogous to the classical case, a conditional sequence $(z_n)$ in $Z$ is a conditional function $f: \mathbf{N}\to Z$, $n\mapsto f(n)=z_n$.
However stability yields $z_n|A+z_m|A^c=f(n)|A+f(m)|A^c=f(n|A+m|A^c)=z_{n|A+m|A^c}$.
In other words, the sequence step $n$ conditioned on $A$ and the sequence step $m$ conditioned on $A^c$ result into the sequence step $n|A+m|A^c$.

\begin{theorem}\label{thm:numrep01}
    A conditional preference order $\succcurlyeq$ on $X$ admits a conditional numerical representation if and only if $X$ has a conditionally countable order dense subset.
\end{theorem}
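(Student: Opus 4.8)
The plan is to prove both implications, the substance lying in the sufficiency direction, where I would build the representation explicitly by a conditional analogue of the classical order-dense (Debreu) series construction. For sufficiency, suppose $Z=(z_n)_{n\in\mathbf{N}}$ is conditionally countable and order dense; by the standing assumption we may take $\succcurlyeq$ conditionally non-trivial. First I would use Lemma \ref{lem:partition} to turn the comparison of each $x\in X$ with each $z_n$ into a genuine partition: there are disjoint conditions $A_n,B_n,C_n$ with $A_n\cup B_n\cup C_n=\Omega$ on which $x\sim z_n$, $x\succ z_n$ and $z_n\succ x$ respectively. This lets me define the conditionally real ``score'' $v_n(x):=2|B_n+1|A_n+0|C_n\in\mathbf{R}$ (a simple random variable valued in $\set{0,1,2}$ under the identification $\mathbf{R}\cong L^0(\mathcal{A})$) and set
\[ U(x):=\sum_{n} 2^{-n}\,v_n(x). \]
The doubled weighting, which records both the strict and the weak comparison of $z_n$ with $x$, is the device that repairs the well-known defect of the naive indicator sum when only the weak form of order density is available.

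The verification then proceeds in four steps. First, convergence: each $v_n(x)$ is conditionally bounded by $2$, so the conditional series is dominated by $\sum_n 2^{-n}\cdot 2=2$ and converges in $\mathbf{R}$ by completeness of the Cantor construction of $\mathbf{R}$, whence $U$ is well defined. Second, stability: since the defining partitions behave consistently under concatenation (consistency and stability of $\succcurlyeq$), one checks $v_n(x|A+y|A^c)=v_n(x)|A+v_n(y)|A^c$ and hence $U(x|A+y|A^c)=U(x)|A+U(y)|A^c$, so that $U$ is a genuine conditional function, compatible with the conditional reindexing $z_{n|A+m|A^c}=z_n|A+z_m|A^c$. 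Third, monotonicity: transitivity gives $v_n(x)\geqslant v_n(y)$ for every $n$ whenever $x\succcurlyeq y$, and therefore $U(x)\geqslant U(y)$. Fourth, strictness, the heart of the matter: if $x\succcurlyeq y$ but not $y\succcurlyeq x$, Lemma \ref{lem:partition} exhibits a non-empty $B$ with $x|B\succ y|B$, and conditional order density supplies some $z_m$ with $x|B\succcurlyeq z_m|B\succcurlyeq y|B$; splitting $B$ according to whether $z_m\sim y$ or $z_m\succ y$, the doubled-weight construction forces $v_m(x)|B$ to strictly exceed $v_m(y)|B$, so $U(x)|B>U(y)|B$ and $U$ separates the pair on $B$. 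Read together these yield \eqref{def:relation}.

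For necessity, given a representation $U:X\to\mathbf{R}$ I would recover a conditionally countable order dense set from the conditional density of $\mathbf{Q}$ in $\mathbf{R}$. For each conditional rational $q\in\mathbf{Q}$, and wherever the conditional set $U(X)$ meets a ball around $q$, I would select a witness $z_q\in X$ by the conditional axiom of choice (the same tool underlying the Gap Lemma of Section \ref{sec:gaptheorem}), which is precisely how the conditional framework avoids a measurable selection. To these I adjoin preimages of the endpoints of the conditional ``gaps'' of $U(X)$, that is, the conditional pairs $a<b$ in $U(X)$ with no value of $U(X)$ strictly between them. Given $x\succ y$, the pair $(U(y),U(x))$ either contains a rational whose witness lands between the two values, or else lies within such a gap whose endpoint witnesses do; conditional order density follows in either case.

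The main obstacle I anticipate is twofold, and both parts are genuinely conditional rather than classical. In sufficiency it is the strictness step: one must ensure that the separation produced by order density is \emph{local}, i.e.\ that it survives restriction to the condition $B$ furnished by Lemma \ref{lem:partition} and is compatible with the conditional strict order $\succ$ (strictness on every non-empty sub-condition), while keeping the entire construction stable so that $U$ remains a bona fide conditional function. In necessity it is establishing that the family of gaps of $U(X)$ is conditionally countable and that the witnesses can be chosen coherently across conditions, for which the conditional axiom of choice, rather than any almost-sure selection theorem, is exactly the right instrument.
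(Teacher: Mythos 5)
Your proposal is correct and follows essentially the same route as the paper: your score $v_n(x)=2|B_n+1|A_n+0|C_n$ equals $1+1_{\{x\succ z_n\}}-1_{\{z_n\succ x\}}$, so your $U(x)=\sum_n 2^{-n}v_n(x)$ coincides up to an additive constant with the paper's $U(x)=\mu(Z^-(x))-\mu(Z^+(x))$ for the strictly positive conditional measure $\mu(\{z_n\})=2^{-n}$, and your strictness step via conditional order density and restriction to the condition $B$ is exactly the paper's argument. For necessity you re-derive inline what the paper isolates as Lemma \ref{lem:orderdense} (rational balls plus the conditional axiom of choice, adjoining the predecessor--successor endpoints, and countability of the gaps from their conditional disjointness), and then pull back along $U$ just as the paper does.
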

\begin{proof}
    \textbf{The if-part:}
    Without loss of generality, assume $Z=\{z_n\colon n\in\mathbf{N}\}$ is a conditionally countable order dense subset of $X$ which is not conditionally finite.
    Consider now
    \begin{equation*}
        Z^+(x):=\Set{z \in Z: z \succ x}\quad \text{and}\quad Z^-(x):=\Set{ z \in Z: x\succ z}.
    \end{equation*}
    Since $\succcurlyeq$ is a conditional binary relation, $Z^+(x)$ and $Z^-(x)$ are conditional subsets of $Z$ for every $x \in X$.
    However, as mentioned in Example \ref{example011}, $Z^\pm(x)$ may both live on some event smaller than $\Omega$.\footnote{Let $A$ be the event on which $Z^+(x)$ lives. It means that there exists no $z \in Z$ such that $z$ is strictly preferred to $x$ conditioned on $A^c$. Since $Z$ is conditionally order dense, it follows that $x$ is a maximal element conditioned on $A^c$.}
    Further, $(Z^\pm(x))_{x\in X}$ is a conditional family in the conditional powerset $\mathcal{P}(Z)$, that is, $Z^\pm(x)= Z^\pm(x_1)|A+Z^\pm(x_2)|A^c$ for every $x=x_1|A+ x_2|A^c \in X$.
    Due to transitivity, 
    \begin{equation}
        x \succcurlyeq y\quad \text{implies} \quad Z^+(x)\sqsubseteq Z^+(y)\text{ and }Z^-(y)\sqsubseteq Z^-(x).
        \label{eq:01}
    \end{equation}
    It follows from conditional order denseness that $x \succ y$ implies that there is $z \in Z$ such that $x \succ z \succcurlyeq y$ on some event $A$ and $x \succcurlyeq z \succ y$ on $A^c$.
    Thus 
    \begin{equation}
        z|A \in  \left[Z^-(x)\setminus Z^-(y)\right]|A \quad \text{and} \quad z|A^c \in \left[Z^+(y)\setminus Z^+(x)\right]|A^c\text{ for some }z \in Z\text{ and }A\in \mathcal{A}.
        \label{eq:02}
    \end{equation}
    Let now $\mu$ be a strictly positive conditional measure on $Z$\footnote{For instance, define $\mu(\set{z_n})=2^{-n}:=\sum  2^{-n_k}|A_k$ for every $n=\sum  n_k|A_k\in \mathbf{N}$.}, that is, $\mu(\set{z_n})>0$ for every $n\in\mathbf{N}$.
    Define then $U(x)=\mu(Z^-(x))-\mu(Z^+(x))$ for every $x \in X$.
    Then $U$ is a conditional function since $(Z^\pm(x))_{x\in X}$ is a conditional family and $\mu$ is a conditional function.
    On the one hand, from \eqref{eq:01} and $\mu$ being conditionally increasing it follows that $x\succcurlyeq y$ implies $U(x)\geqslant U(y)$.
    On the other hand, assume that $x\succ y$ on some non-empty event $A$. 
    Without loss of generality, $A=\Omega$.
    Then from \eqref{eq:02} and $\mu$ being strictly positive it follows that $x\succ y$ yields
    \begin{align*}
        U(x)&=\mu(Z^-(x))-\mu(Z^+(x))\\
        &\geqslant \left[\mu(\set{z})+\mu(Z^-(y))\right]|A-\left[\mu(Z^+(y))-\mu(\set{z})\right]|A^c\\
        &=\mu(\set{z})+U(y)>U(y).
    \end{align*}
    From the conditional completeness of $\succcurlyeq$ it follows that $U$ is a conditional numerical representation.

    \textbf{The only if-part:}
    A conditional preference order which admits a conditional numerical representation is  conditionally complete since the conditional reals are so. 
    It holds that $Y:=Im(U)$ is a conditional subset of $\mathbf{R}$.
    Choose a conditionally countable order dense subset $I\sqsubseteq Y$ by Lemma \ref{lem:orderdense}.
    Then $Z:=U^{-1}(I)$ is conditionally countable and since $U$ is a conditional numerical representation, it is conditionally order dense.
\end{proof}
The existence of a conditionally countable order dense subset is rather of technical nature.
In the classical case, \citet{debreu01,Debreu03} and then \citet{rader01} showed that under some topological assumptions a numerical representation exists.
And even more, by means of Debreu's Gap Lemma, the existence of an upper semi-continuous or continuous representation is guaranteed.
The conditional counterparts of these results also hold, based on a conditional adaptation of Debreu's Gap Lemma in Section \ref{sec:gaptheorem}.
\begin{definition}
    Let $\succcurlyeq$ be a conditional preference order on a conditional topological space $X$.\footnote{A conditional topology is the counterpart to a classical topology but with respect to the conditional operations of union and intersection, see \citep{djkk2013}.}
    We say that $\succcurlyeq$ is \emph{conditionally upper semi-continuous} if $\mathcal{U}(x):=\set{y\in X: y\succcurlyeq x}$ is conditionally closed for every $x\in X$.
    A conditional numerical representation $U:X\to \mathbf{R}$ is said to be conditionally upper semi-continuous if $\set{x \in X: U(x)\geqslant m}$ is conditionally closed for every $m \in \mathbf{R}$.
\end{definition}

\begin{theorem}\label{thm:rader}
    Let $\succcurlyeq$ be a conditionally upper semi-continuous preference order on a conditionally second countable\footnote{The conditional topology of which is generated by a conditionally countable neighborhood base.} topological space $X$.
    Then $\succcurlyeq$ admits a conditionally upper semi-continuous numerical representation.
    In particular, if $\succcurlyeq$ is conditionally continuous\footnote{That is, $\mathcal{U}(x)=\set{y \in X\colon y\succcurlyeq x}$ and $\mathcal{U}(y)=\set{y \in X\colon x\succcurlyeq y}$ are conditionally closed for every $x \in X$.}, then it admits a conditionally continuous numerical representation.
\end{theorem}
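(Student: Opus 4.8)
The plan is to transpose the classical route of Rader and Debreu to the conditional setting in three movements: extract order-theoretic data from the topological hypothesis, build \emph{some} conditional representation by exploiting conditional second countability, and finally rectify it into a conditionally upper semi-continuous (resp.\ continuous) one by means of the conditional Gap Lemma of Section \ref{sec:gaptheorem}.

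First I would convert conditional upper semi-continuity into a statement about the strict lower contour sets. By assumption the conditional upper contour sets $\mathcal{U}(x)=\{y:y\succcurlyeq x\}$ are conditionally closed, and using local completeness through Lemma \ref{lem:partition} one checks that the conditional complement of $\mathcal{U}(x)$ is exactly the conditional strict lower contour set $\mathcal{L}(x)=\{y:x\succ y\}$, which is therefore conditionally open. Fixing a conditionally countable base $(O_n)_{n\in\mathbf{N}}$ of the conditional topology, I would then define a first representation by the conditional Rader sum
\begin{equation*}
    U_0(x)=\sum_{n\in N(x)} 2^{-n},\qquad N(x)=\{n\in\mathbf{N}:O_n\sqsubseteq\mathcal{L}(x)\},
\end{equation*}
with strictly positive conditional weights $2^{-n}$ as in the proof of Theorem \ref{thm:numrep01}. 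That $U_0$ is a conditional function follows from the stability of $(N(x))_{x\in X}$ as a conditional family; that it represents $\succcurlyeq$ follows exactly as in Theorem \ref{thm:numrep01}: monotonicity is immediate since $x\succcurlyeq y$ implies $\mathcal{L}(y)\sqsubseteq\mathcal{L}(x)$ and hence $N(y)\sqsubseteq N(x)$, while for $x\succ y$ the fact that $\mathcal{L}(x)$ is conditionally open and contains $y$ produces, on a partition of $\Omega$, a base element $O_n$ with $O_n\sqsubseteq\mathcal{L}(x)$ but $O_n\not\sqsubseteq\mathcal{L}(y)$, separating the two values strictly. Equivalently, one may simply note that this already exhibits a conditionally countable order dense subset and invoke Theorem \ref{thm:numrep01}.

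The function $U_0$ need not be conditionally upper semi-continuous, and the second movement repairs this. The image $Y:=\mathrm{Im}(U_0)$ is a conditional subset of $\mathbf{R}$, and I would feed it to the conditional Gap Lemma, which provides a conditionally strictly increasing $g:Y\to\mathbf{R}$ all of whose conditional gaps are of the open type. For $u:=g\circ U_0$ the conditional superlevel sets $\{x:u(x)\geqslant m\}$ then coincide, for every $m\in\mathbf{R}$, with a conditional upper contour set $\mathcal{U}(x)$ — the open-gap property being precisely what prevents a value lying inside a gap of $g(Y)$ from producing a half-open level set — and hence are conditionally closed by conditional upper semi-continuity. Thus $u$ is a conditionally upper semi-continuous numerical representation. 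For the ``in particular'' part, conditional continuity yields in addition that the conditional sublevel sets $\{x:u(x)\leqslant m\}$ are conditionally closed; the two-sided open-gap property delivered by the same Gap Lemma makes both families of level sets conditionally closed, so that $g\circ U_0$ is conditionally continuous.

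The main obstacle I anticipate is not the bookkeeping of the Rader sum but the passage through the Gap Lemma while respecting the conditional structure. Classically one fills gaps by a measurable selection; here, as emphasised in the introduction, every choice of gap endpoints and of separating base elements must be carried out \emph{conditionally}, that is via the conditional axiom of choice, so that the resulting $g$ and $u$ are genuine conditional functions, stable under the concatenations $x|A+y|A^c$, rather than merely measurable maps. A secondary point demanding care is that, as already illustrated in Example \ref{example011}, the conditional objects $\mathcal{L}(x)$, $N(x)$ and $Y$ may ``live'' on events strictly smaller than $\Omega$, so that all inclusions and the identification of the superlevel sets with conditional upper contour sets must be read conditionally and glued across the partition supplied by Lemma \ref{lem:partition}.
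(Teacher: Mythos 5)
Your proposal is correct and follows essentially the same route as the paper: the paper likewise forms the conditionally open set $\mathcal{U}(x)^{\sqsubset}$, defines $U(x)$ as the $\mu$-mass of $\set{n\colon O_n\sqsubseteq \mathcal{U}(x)^\sqsubset}$ for a strictly positive measure $\mu$ on $\mathbf{N}$ (your $2^{-n}$ weights being one instance), verifies strict monotonicity via a base element separating $Z(x)$ from $Z(y)$, and then upgrades to upper semi-continuity through the conditional Gap Lemma (packaged there as Theorem \ref{thm:debreu}). Your inline justification of the final step, including the identification of $\mathcal{U}(x)^\sqsubset$ with the strict lower contour set via Lemma \ref{lem:partition} and the attention to the events on which these sets live, matches the paper's argument.
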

\begin{example}
    The conditional set $L^1(\mathcal{B})$ of Example \ref{exep:03bis} is a typical framework in which Rader's Theorem applies.
    Indeed, as soon as $\mathcal{B}$ is regular enough\footnote{That is, a separable $\sigma$-algebra.}, then it follows that $L^1(\mathcal{B})$ is a conditionally second countable Banach space for the conditional norm $\norm{x}=E[\abs{x}|\mathcal{A}]$.
    Therefore, any conditionally upper semi-continuous conditional preference order $\succcurlyeq$ on the set of random outcomes in a year from now conditioned on the information tomorrow admits a numerical representation $U$ such that
    \begin{equation*}
        x\succcurlyeq y \quad \text{if and only if}\quad U(x)\geqslant U(y)\quad \text{almost surely.}
    \end{equation*}
\end{example}
\section{A Conditional von Neumann-Morgenstern Representation}\label{sec:condvnm}
A classical class of preferences are the affine ones and the resulting affine numerical representation due to \citet{neumann01}.
This representation can be carried over to the conditional case as follows.
Let $X$ be a conditionally convex subset living on $1$ of some conditional vector space.
We say that a conditional preference order on $X$ satisfies the
\begin{itemize}
    \item \textbf{conditional independence axiom:} if $x \succ y$ then $\alpha x+(1-\alpha)z \succ \alpha y+(1-\alpha)z$ for every $z \in X$ and each $\alpha \in ]0,1]$;
    \item \textbf{conditional Archimedean axiom:} if $x \succ y \succ z$ then $\alpha x + (1-\alpha) z \succ y \succ \beta x + (1-\beta) z$ for some $\alpha, \beta\in ]0,1[$.
\end{itemize}
A conditional real-valued numerical representation $U$ of $\succcurlyeq$ is \emph{conditionally affine}, if 
\begin{equation*}
    U(\alpha x + (1-\alpha)y)=\alpha U(x) + (1-\alpha)U(y),
\end{equation*}
for every $x,y\in X$ and each $\alpha \in [0,1]$.
\begin{theorem}\label{thm:affine}
    Let $\succcurlyeq$ be a conditional preference order satisfying both the conditional Archimedean and independence axioms.
    Then $\succcurlyeq$ admits a conditionally affine representation $U$.
    Moreover, if $\hat{U}$ is another conditionally affine representation, then $\hat{U}=\alpha U+ \beta$, where $\alpha>0$ and $\beta \in \mathbf{R}$.
\end{theorem}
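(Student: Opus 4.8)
The plan is to transcribe the classical von Neumann--Morgenstern argument into the conditional setting, systematically replacing ordinary scalars by conditional scalars in $\mathbf{R}$ and using local completeness together with stability to glue conditionally defined quantities across partitions of $\Omega$.

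First I would establish a conditional mixture-monotonicity lemma: if $x\succ y$ and $\alpha,\beta$ are conditional scalars with $0\leqslant\alpha<\beta\leqslant 1$, then $\beta x+(1-\beta)y\succ\alpha x+(1-\alpha)y$. This follows from the conditional independence axiom as classically: one first derives the auxiliary relation $x\succ \alpha x+(1-\alpha)y$ by mixing $x\succ y$ with $x$, and then writes $\beta x+(1-\beta)y$ as a mixture of $x$ with $\alpha x+(1-\alpha)y$ and applies independence once more; consistency and stability ensure the resulting strict preference holds on every non-empty event. Combining monotonicity with the conditional Archimedean axiom, I would prove the calibration lemma: for $x\succ p\succ y$ there is a \emph{unique} conditional scalar $\gamma\in\,]0,1[$ with $p\sim\gamma x+(1-\gamma)y$. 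Uniqueness is immediate from monotonicity, while existence is obtained by taking the conditional supremum $\gamma:=\sup\set{\alpha\in[0,1]\colon p\succcurlyeq\alpha x+(1-\alpha)y}$ --- which exists because $\mathbf{R}$ is conditionally order complete --- and using the Archimedean axiom to exclude any non-empty event on which $\gamma x+(1-\gamma)y\succ p$ or $p\succ\gamma x+(1-\gamma)y$.

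Next I fix reference points. By the standing non-triviality assumption there are $x^\ast\succ y^\ast$, and since $\succ$ is strict on every non-empty event these anchors live on all of $\Omega$. For arbitrary $p\in X$, Lemma~\ref{lem:partition} applied to the pairs $(p,x^\ast)$ and $(p,y^\ast)$ partitions $\Omega$ into conditions on which, respectively, $x^\ast\succcurlyeq p\succcurlyeq y^\ast$, or $p\succ x^\ast$, or $y^\ast\succ p$. On the first condition I let $U(p)$ be the calibration coefficient of $p$ between $x^\ast$ and $y^\ast$; on $\set{p\succ x^\ast}$ I calibrate $x^\ast$ between $p$ and $y^\ast$, writing $x^\ast\sim\gamma p+(1-\gamma)y^\ast$ and setting $U(p):=\gamma^{-1}$; on $\set{y^\ast\succ p}$ I proceed symmetrically. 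Stability then assembles these pieces into a single $U(p)\in\mathbf{R}$, yielding a conditional function normalised so that $U(x^\ast)=1$ and $U(y^\ast)=0$.

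It remains to verify representation, affinity and uniqueness. That $x\succcurlyeq y$ holds if and only if $U(x)\geqslant U(y)$ follows from monotonicity and the calibration lemma applied case-by-case on the partition. Affinity, $U(\alpha x+(1-\alpha)y)=\alpha U(x)+(1-\alpha)U(y)$ for conditional $\alpha\in[0,1]$, is the classical chain of mixture identities, each step an application of the independence axiom performed on the relevant condition and glued by stability. For uniqueness, given another conditionally affine representation $\hat{U}$ I set $\alpha:=(\hat{U}(x^\ast)-\hat{U}(y^\ast))(U(x^\ast)-U(y^\ast))^{-1}$ and $\beta:=\hat{U}(y^\ast)$; the inverse exists in $\mathbf{R}$ precisely because $x^\ast\succ y^\ast$ on all of $\Omega$ forces $U(x^\ast)-U(y^\ast)>0$ almost surely, and affinity then gives $\hat{U}=\alpha U+\beta$ on every piece of the calibrating partition of each $p$. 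The main obstacle I anticipate is the calibration lemma itself: without global completeness one cannot invoke an intermediate-value step, so the coefficient must be produced as a conditional supremum and the Archimedean axiom must be used to rule out strict preference surviving on any non-empty sub-event --- it is here that the strong, event-wise notion of $\succ$ and the conditional order completeness of $\mathbf{R}$ do the essential work.
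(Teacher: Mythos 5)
Your proposal is correct and follows essentially the same route as the paper: the same conditional mixture-monotonicity and calibration lemmas (the latter obtained as a conditional supremum, combined with the partition of Lemma~\ref{lem:partition} and the Archimedean axiom to exclude any non-empty event carrying a strict preference), followed by the calibration-based definition of $U$ on an order interval and its extension by re-calibration outside it. The only difference is one of exposition: you spell out the extension to all of $X$ and the uniqueness argument, which the paper delegates to the classical reference.
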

The result of Neumann and Morgenstern goes a step forward by providing a utility index against which lotteries are ranked according to expectation.
In our context, following Example \ref{exep:03}, conditional lotteries on the real line is the conditional set
\begin{equation*}
    P(\mathcal{A}):=\left\{ \mu:\Omega \to P \text{ such that } \mu\text{ is measurable}\right\}
\end{equation*}
where $P$ denotes a set of deterministic lotteries on the real line.\footnote{These are related to conditional distribution on the conditional real line $\mathbf{R}$, see \citet{asgar2015} for the construction and definition of such conditional probability distributions.}
We endow this conditional set with the conditional weak${}^\ast$ topology generated by the conditional set of bounded functions
\begin{equation*}
    C_b(\mathcal{A}):=\left\{ f:\Omega \to C_b\text{ such that } f\text{ is measurable} \right\}
\end{equation*}
where $C_b$ is the set of continuous functions from the real line into the real line.
In other words, a function $u \in C_b(\mathcal{A})$ is a state-dependent continuous function $u(\omega,x)$, that is, a state-dependent utility index.
The conditional scalar product is given by the random variable 
\begin{equation*}
    \omega \longmapsto \langle f, \mu\rangle(\omega) =\int_{\mathbb{R}}f(\omega,x)\mu(\omega,dx), \quad \text{for almost all }\omega \in \Omega
\end{equation*}
which is an element of $\mathbf{R}=L^0(\mathcal{A})$.
In this framework, the classical representation theorem of von Neumann and Morgenstern carries over as follows.
\begin{theorem}
    Let $\succcurlyeq$ be a conditional preference order on the conditional convex set of lotteries $P(\mathcal{A})$.
    Suppose that $\succcurlyeq$ fulfills the conditional independence and Archimedean axioms and $\succcurlyeq$ is weak${}^\ast$-continuous.
    Then there exists a unique, up to strictly positive conditionally affine transformation, conditional utility function $u\in C(\mathcal{A})$ such that
    \begin{equation*}
       \mu\succcurlyeq \nu \quad \text{if and only if}\quad\int u(\omega, x)\mu(\omega, dx)\geq \int u(\omega, x)\nu(\omega, dx)\quad \text{for almost all }\omega \in \Omega.
    \end{equation*}
\end{theorem}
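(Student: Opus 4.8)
The plan is to reduce everything to the affine representation already available from Theorem \ref{thm:affine} and then identify that affine functional as conditional integration against a state-dependent utility index. Since $\succcurlyeq$ satisfies both the conditional independence and Archimedean axioms, Theorem \ref{thm:affine} yields a conditionally affine numerical representation $U:P(\mathcal{A})\to\mathbf{R}$, unique up to strictly positive conditionally affine transformation. The whole task therefore becomes showing that any such conditionally affine $U$ is of the form $U(\mu)=\langle u,\mu\rangle=\int u(\omega,x)\,\mu(\omega,dx)$ for some $u\in C(\mathcal{A})$; the uniqueness clause of the theorem will then be inherited verbatim from the uniqueness clause of Theorem \ref{thm:affine}.

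First I would define the utility index by evaluating $U$ on conditional Dirac lotteries. For a conditional real $x\in\mathbf{R}$, let $\delta_x\in P(\mathcal{A})$ denote the conditional Dirac at $x$, and set $u(\cdot,x):=U(\delta_x)$. Because $x\mapsto\delta_x$ and $U$ are both conditional functions, the composition is stable, so $u$ is a well-defined state-dependent map $u(\omega,x)$. To place $u$ in $C(\mathcal{A})$ I would use that the assumed weak${}^\ast$-continuity of $\succcurlyeq$ forces the affine $U$ to be weak${}^\ast$-continuous (closed upper and lower contour sets at every value, together with affinity, pin down continuity of $U$), and that $x\mapsto\delta_x$ is conditionally weak${}^\ast$-continuous; composing gives continuity of $u(\omega,\cdot)$.

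Next, conditional affinity of $U$ gives the representation immediately on conditional simple lotteries: for $\mu=\sum_i\lambda_i\delta_{x_i}$, a conditionally finite convex combination of Diracs with conditional weights $\lambda_i$, one obtains $U(\mu)=\sum_i\lambda_i\,u(\cdot,x_i)=\langle u,\mu\rangle$. The remaining step is extension to arbitrary $\mu$: the conditional simple lotteries are conditionally weak${}^\ast$-dense in $P(\mathcal{A})$, so by conditional weak${}^\ast$-continuity of $U$ and of $\mu\mapsto\langle u,\mu\rangle$ the identity $U(\mu)=\langle u,\mu\rangle$ passes to all $\mu\in P(\mathcal{A})$. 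Substituting into \eqref{def:relation} produces the claimed equivalence $\mu\succcurlyeq\nu\iff\langle u,\mu\rangle\geq\langle u,\nu\rangle$ almost surely.

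The main obstacle will be the last extension step, namely controlling the passage from conditional simple lotteries to general conditional lotteries. Verifying that $\mu\mapsto\langle u,\mu\rangle$ is conditionally weak${}^\ast$-continuous is automatic when $u$ is conditionally bounded continuous, since then $\langle u,\cdot\rangle$ is by definition one of the functionals generating the conditional weak${}^\ast$ topology of $C_b(\mathcal{A})$; the delicate case is an unbounded $u\in C(\mathcal{A})$, which I expect to handle by a conditional truncation $u_n:=(-n)\vee u\wedge n$, applying the bounded case to each $u_n$, and controlling the limit by a conditional monotone approximation together with the Archimedean axiom to keep the conditional integrals finite. Establishing that $u$ lands in $C(\mathcal{A})$ rather than being merely measurable in $\omega$, and that conditional simple lotteries are genuinely conditionally dense, are the two further points where the conditional—as opposed to classical—bookkeeping must be done with care.
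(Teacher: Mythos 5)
Your proposal is correct in outline and begins exactly as the paper does, by invoking Theorem \ref{thm:affine} to obtain a conditionally affine representation $U$ that is unique up to strictly positive conditionally affine transformation. Where you diverge is in the second half: the paper disposes of the identification of $U$ with $\mu\mapsto\int u(\omega,x)\,\mu(\omega,dx)$ in a single sentence by citing a conditional version of the Riesz representation theorem from \citep{asgar2015}, whereas you re-derive that identification by hand --- defining $u(\cdot,x):=U(\delta_x)$ on conditional Dirac lotteries, using conditional affinity to get the formula on conditional simple lotteries, and passing to general lotteries by weak${}^\ast$-density and continuity. Your route is self-contained and makes visible exactly which structural facts are being used, but it obliges you to supply three lemmas that the Riesz citation packages away: that a conditionally affine representation of a weak${}^\ast$-continuous preference can be chosen weak${}^\ast$-continuous (your ``closed contour sets plus affinity pin down continuity'' claim is a genuine lemma, standard classically but requiring a conditional proof), that conditional simple lotteries are conditionally weak${}^\ast$-dense in $P(\mathcal{A})$, and that $\omega\mapsto u(\omega,\cdot)$ is measurable so that $u$ lands in $C(\mathcal{A})$. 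One simplification available to you: since the conditional weak${}^\ast$ topology is generated by $C_b(\mathcal{A})$, a weak${}^\ast$-continuous conditionally affine $U$ is automatically represented by a \emph{bounded} continuous index, so the unbounded-truncation machinery in your last paragraph should not be needed. Neither approach is wrong; the paper's is shorter at the cost of an external reference, yours is longer but constructive.
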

\begin{proof}
    This is a consequence of Theorem \ref{thm:affine} together with a conditional version of the Riez theorem, see \citep{asgar2015}.
\end{proof}

\section{Further Conditional Representations}\label{sec:moncon}
As in the classical case, the assumptions of Theorem \ref{thm:rader} are empirically as well as mathematically problematic for the following reasons
\begin{itemize}
    \item Many of the topologies of interest for practical representations are not second countable, not even metrizable;
    \item Requiring upper semi-continuity is an empirical issue in particular for non-metrizable topologies since it is not practically falsifiable.
\end{itemize}

An answer to the first point is the following proposition that relies on a conditional version of Banach-Alaoglu based on the notion of conditional compactness introduced in \citep{djkk2013}.
\begin{proposition}\label{thm:repweak}
    Let $Y$ be a conditionally separable locally convex topological vector space admitting a conditionally countable neighborhood base of $0$, and denote by $X$ its topological dual endowed with the conditional weak${}^\ast$ topology $\sigma(X,Y)$.
    Then every conditionally upper semi-continuous preference order $\succcurlyeq$ on $X$ admits an upper semi-continuous numerical representation.
\end{proposition}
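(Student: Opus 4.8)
The plan is to reduce the statement to the results already established for conditionally second countable spaces, namely Theorem~\ref{thm:rader} and Theorem~\ref{thm:numrep01}. The obstruction is that $(X,\sigma(X,Y))$ is in general \emph{not} conditionally second countable, so Theorem~\ref{thm:rader} cannot be applied to $X$ directly. Instead I would exhaust $X$ by conditionally compact pieces that \emph{are} conditionally second countable, transport a conditionally countable order dense subset from each piece, assemble a global one, and then produce the upper semi-continuity through the conditional Gap Lemma of Section~\ref{sec:gaptheorem}. To build the exhaustion, let $(V_n)_{n\in\mathbf{N}}$ be a conditionally countable neighborhood base of $0$ in $Y$, taken conditionally decreasing and balanced, and consider the conditional polars $V_n^{\circ}:=\{x\in X:\langle x,y\rangle\leqslant 1\text{ for all }y\in V_n\}$. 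By the conditional Banach--Alaoglu theorem of \citep{djkk2013} each $V_n^{\circ}$ is conditionally compact for $\sigma(X,Y)$. Since every $x\in X$ is conditionally continuous it is conditionally bounded on some $V_n$, hence lies in a conditional scalar multiple $k\,V_n^{\circ}$; enumerating the conditionally countable family $\{k\,V_n^{\circ}:n,k\in\mathbf{N}\}$ and passing to conditional unions yields a conditionally increasing sequence $(K_n)$ of conditionally compact sets with $X=\bigsqcup_n K_n$, where the local (conditional) nature of ``some $n$'' is absorbed by the stability of conditional sets.

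Next I would use conditional separability to make each $K_n$ conditionally second countable. Fix a conditionally dense sequence $(y_m)_{m\in\mathbf{N}}$ in $Y$ and consider the conditional evaluation $e:X\to\mathbf{R}^{\mathbf{N}}$, $e(x)=(\langle x,y_m\rangle)_{m\in\mathbf{N}}$. It is conditionally continuous and conditionally injective, because conditional density of $(y_m)$ forces $\langle x,\cdot\rangle$ to be determined by its values on the $y_m$, and $\mathbf{R}^{\mathbf{N}}$ is conditionally metrizable. Restricted to the conditionally compact $K_n$ the map $e$ is a conditional homeomorphism onto its image, since a conditionally continuous injection from a conditionally compact set into a conditionally Hausdorff space is a conditional homeomorphism \citep{djkk2013}. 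Hence each $K_n$ is conditionally metrizable, and in particular conditionally second countable.

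Now I would transport order dense subsets and obtain a representation. The restriction $\succcurlyeq|_{K_n}$ is conditionally upper semi-continuous, its upper contour sets in $K_n$ being the conditional intersections of the conditionally closed $\mathcal{U}(x)$ with $K_n$. By Theorem~\ref{thm:rader} it therefore admits a conditional numerical representation, whence the only-if part of Theorem~\ref{thm:numrep01} provides a conditionally countable order dense subset $Z_n\sqsubseteq K_n$. The conditional union $Z:=\bigsqcup_n Z_n$ is conditionally countable, and it is order dense in $X$: given $x\succ y$ the exhaustion yields an $N$ with $x,y\in K_N$, and order denseness of $Z_N$ in $K_N$ supplies $z\in Z_N\subseteq Z$ with $x\succcurlyeq z\succcurlyeq y$ (the order on $K_N$ being the restriction of that on $X$). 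By Theorem~\ref{thm:numrep01}, $\succcurlyeq$ admits a conditional numerical representation $U_0:X\to\mathbf{R}$.

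Finally I would upgrade $U_0$ to an upper semi-continuous representation. For $m$ in the conditional image of $U_0$ the set $\{U_0\geqslant m\}$ equals an upper contour set $\mathcal{U}(x)$, which is conditionally closed by hypothesis; the only obstruction to conditional upper semi-continuity lies in the conditional gaps of the image. Applying the conditional Gap Lemma of Section~\ref{sec:gaptheorem} to $U_0(X)\sqsubseteq\mathbf{R}$ furnishes a conditionally strictly increasing $g$ filling these gaps, so that $U:=g\circ U_0$ is a conditionally upper semi-continuous numerical representation, exactly as in the proof of Theorem~\ref{thm:rader}. I expect the second and fourth steps to be the main obstacles: deriving conditional metrizability, hence conditional second countability, of the compact polars from conditional separability (which forces one to invoke the conditional forms of Banach--Alaoglu and of ``a continuous injection on a compact set is a homeomorphism''), and, since $X$ itself is not conditionally second countable, securing \emph{global} upper semi-continuity via the Gap Lemma rather than by a direct appeal to Theorem~\ref{thm:rader}.
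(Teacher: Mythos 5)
Your proposal is correct and follows essentially the same route as the paper: decompose $X$ into the conditionally compact polars of a countable neighborhood base of $0$ in $Y$ (conditional Banach--Alaoglu), use conditional separability of $Y$ to make each polar conditionally metrizable hence second countable, apply Theorem~\ref{thm:rader} and the only-if part of Theorem~\ref{thm:numrep01} on each piece to extract countable order dense subsets, assemble them into a global one, and finish with Theorem~\ref{thm:numrep01} and the Gap Lemma (Theorem~\ref{thm:debreu}) for upper semi-continuity. The only cosmetic differences are that the paper covers $X$ by the polars directly (Lemma~\ref{lem:abdeckung}, no scalar multiples needed) and constructs the metric on each polar explicitly from the dense sequence rather than embedding into $\mathbf{R}^{\mathbf{N}}$.
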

\begin{example}
    Suppose that we are interested into the decision making of an agent according to tomorrow's information $\mathcal{A}$ between random outcomes in a year from now, that is measurable with respect to some wider information algebra $\mathcal{B}\supseteq \mathcal{A}$.
    Suppose further that these random outcomes are bounded, that is, belong to the following conditional set
    \begin{equation*}
        L^\infty(\mathcal{B}):=\left\{ x:\Omega \to \mathbb{R}\text{ such that } x\text{ is measurable and bounded by an }\mathcal{A}\text{-measurable random variable} \right\}.
    \end{equation*}
    From \citep{fillipovic2012} it is known that it is the conditional dual of $L^1(\mathcal{B})$ introduced in Example \ref{exep:03bis} which is conditionally separable provided that $\mathcal{B}$ is regular enough.
    It follows that if $\succcurlyeq$ is $\sigma(L^\infty(\mathcal{B}),L^1(\mathcal{B}))$-upper semi-continuous, then it admits a numerical representation $U$.
    If furthermore, $\succcurlyeq$ is 
    \begin{itemize}
        \item \emph{conditionally convex}: $x\succcurlyeq y$ implies $\alpha x+(1-\alpha)y \succcurlyeq y$ for every conditional real number $0\leqslant \alpha \leqslant 1$ which describes a form of preference of diversification;
        \item \emph{monotone}: $x\succcurlyeq y$ whenever $x\leqslant y$, which describes a preference for almost sure better outcomes;
    \end{itemize}
    then, it follows that $U$ is conditionally quasiconvex and monotone and by means of \citep{marinacci03,drapeau01}, and the conditional extension in \citep[Theorem 2.12 and Remark 2.13]{drapeau2014}, it admits the following robust representation
    \begin{equation*}
        U(x)=\inf_{Q \in \Delta}R\left( Q, E_Q[x|\mathcal{A}] \right)
    \end{equation*}
    for a unique\footnote{The characterization of which can be found in \citep[Theorem 2.12 and Remark 2.13]{drapeau2014}.} conditional risk function $R:\Delta \times \mathbf{R}\to [-\infty, \infty]$ where $\Delta$ is the set of probability measures absolutely continuous with respect to the reference measure.
\end{example}
As for the second question, we show that automatic continuity results taking advantage of monotonicity also extends to the continuous case and yields the following result.
\begin{proposition}\label{prop:aut:cont}
    Let $\succcurlyeq$ be a conditionally complete preference order on a conditional Banach space\footnote{It also holds on a Fr\'echet lattice.} $X$.
    Suppose that $\succcurlyeq$ is conditionally convex and monotone and satisfies the
    \begin{itemize}
    \item[] \textbf{Upper Archimedean Axiom:} If $x\succcurlyeq y\succ z$, then there exists $\alpha \in \mathbf{R}$ with $0<\alpha <1$ such that $y \succ \alpha x+(1-\alpha)z$.
    \end{itemize}
    Then $\succcurlyeq$ is conditionally upper semi-continuous.
\end{proposition}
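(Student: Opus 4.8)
The plan is to show directly that the conditional upper contour set $\mathcal{U}(x_0)=\set{y\in X\colon y\succcurlyeq x_0}$ is conditionally closed for each $x_0\in X$. Since a conditional Banach space is conditionally metrizable, conditional closedness may be tested along conditional sequences, so I would take a conditional sequence $(y_n)$ in $\mathcal{U}(x_0)$ converging conditionally to some $y\in X$ and prove $y\succcurlyeq x_0$. Suppose this fails. Applying Lemma \ref{lem:partition} to the pair $(y,x_0)$ yields a non-empty event $C$ on which $x_0|C\succ y|C$; since conditional monotonicity, the Upper Archimedean Axiom and the conditional norm convergence are all stable under conditioning, I may condition everything on $C$ and argue, without loss of generality, under the assumption $x_0\succ y$ on all of $\Omega$, deriving a contradiction there.

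The decisive step is to convert the \emph{topological} convergence $y_n\to y$ into an \emph{order} relation that monotonicity can exploit. Here I would invoke the conditional analogue of the classical Banach-lattice fact that a norm-null sequence has a subsequence dominated by a scalar multiple of a fixed positive element: passing to a conditional subsequence with $\norm{y_{n_k}-y}\leqslant 4^{-k}$ and setting $v:=\sum_k 2^{k}\abs{y_{n_k}-y}$, which converges conditionally in norm since $\sum_k 2^k 4^{-k}<\infty$, one obtains $\abs{y_{n_k}-y}\leqslant 2^{-k}v$ and in particular $y-2^{-k}v\leqslant y_{n_k}$. Monotonicity and transitivity then give $y-2^{-k}v\succcurlyeq y_{n_k}\succcurlyeq x_0$ for every $k\in\mathbf{N}$. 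Using monotonicity once more, for every conditional real $\lambda>0$ one selects a conditional index $k$ with $2^{-k}\leqslant\lambda$, so that $y-\lambda v\leqslant y-2^{-k}v$ and, by stability of the conditional family,
\begin{equation*}
    y-\lambda v\succcurlyeq x_0\quad\text{for every }\lambda>0.
\end{equation*}

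The contradiction then comes from the Upper Archimedean Axiom. Fixing any $\lambda_0>0$, we have $y-\lambda_0 v\succcurlyeq x_0\succ y$, so the axiom furnishes $\alpha\in\mathbf{R}$ with $0<\alpha<1$ and
\begin{equation*}
    x_0\succ \alpha\,(y-\lambda_0 v)+(1-\alpha)\,y=y-\alpha\lambda_0 v.
\end{equation*}
As $\alpha\lambda_0>0$, the displayed relation above applied with $\lambda=\alpha\lambda_0$ yields $y-\alpha\lambda_0 v\succcurlyeq x_0$, which by consistency contradicts $x_0\succ y-\alpha\lambda_0 v$. Hence $x_0\succ y$ is impossible, so $y\succcurlyeq x_0$, $\mathcal{U}(x_0)$ is conditionally closed, and $\succcurlyeq$ is conditionally upper semi-continuous.

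I expect the main obstacle to be the second paragraph: making the order-sandwich rigorous in the conditional setting. One must check that the series defining $v$ converges conditionally in the conditional Banach lattice, that the conditional lattice operations deliver $\abs{y_{n_k}-y}\leqslant 2^{-k}v$ globally, and that both the subsequence $(n_k)$ and the index $k$ attached to a state-dependent $\lambda$ can be chosen conditionally --- precisely the kind of selection handled by the conditional axiom of choice rather than by a measurable selection argument, in the spirit of Section \ref{sec:gaptheorem}. The remaining steps (the localization on $C$, the monotonicity comparisons, and the single use of the Upper Archimedean Axiom) are routine; notably, conditional convexity does not appear to be explicitly invoked along this route, the affine combination entering only through the statement of the Archimedean axiom itself.
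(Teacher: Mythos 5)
Your route is genuinely different from the paper's and, once one sign is corrected, it works. The paper argues in two steps: first it shows (Proposition \ref{prop:automatic_continuity}) that a conditionally monotone and convex set which is closed along every conditional line segment is conditionally closed, using exactly the lattice-domination device you describe (there with $x+\sum_k k(x_k-x)^+$); then, in the proof of Proposition \ref{prop:aut:cont}, it uses convexity to identify $f^{-1}(\mathcal{U}(x_0))$ with a conditional interval and the Upper Archimedean Axiom to show that the interval contains its endpoints. You merge the two steps by applying the Archimedean axiom directly to the dominating ray, which is more economical and --- as you observe --- never invokes conditional convexity of $\succcurlyeq$, so your argument would in fact prove the statement without that hypothesis.

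The one genuine defect is a sign error that, as written, breaks the central monotonicity step. From $\abs{y_{n_k}-y}\leqslant 2^{-k}v$ you correctly get $y-2^{-k}v\leqslant y_{n_k}$, but monotonicity (larger is preferred) then yields $y_{n_k}\succcurlyeq y-2^{-k}v$, not $y-2^{-k}v\succcurlyeq y_{n_k}$; the element sitting \emph{below} $y_{n_k}$ inherits nothing. Likewise, for $2^{-k}\leqslant\lambda$ the inequality $y-\lambda v\leqslant y-2^{-k}v$ transfers the relation in the useless direction, and indeed $y-\lambda v\succcurlyeq x_0$ cannot be expected for all $\lambda>0$, since for large $\lambda$ monotonicity pushes the other way. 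The repair is to work with the ray \emph{above} $y$: from $y_{n_k}-y\leqslant 2^{-k}v$ one gets $y+2^{-k}v\geqslant y_{n_k}$, hence $y+2^{-k}v\succcurlyeq y_{n_k}\succcurlyeq x_0$, and then for every $\lambda>0$, choosing $k$ with $2^{-k}\leqslant\lambda$ gives $y+\lambda v\geqslant y+2^{-k}v$ and so $y+\lambda v\succcurlyeq x_0$. Applying the Upper Archimedean Axiom to $y+\lambda_0 v\succcurlyeq x_0\succ y$ produces $\alpha\in]0,1[$ with $x_0\succ y+\alpha\lambda_0 v$, contradicting $y+\alpha\lambda_0 v\succcurlyeq x_0$ since $\alpha\lambda_0>0$. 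With this flip your selection rule $2^{-k}\leqslant\lambda$ is exactly right, and the rest of the proof --- the localization via Lemma \ref{lem:partition}, the conditional choice of the rapid subsequence, and the closedness of the positive cone needed for $v\geqslant 2^k\abs{y_{n_k}-y}$ --- stands.
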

Let us illustrate this automatic continuity result in the framework presented in \citep{drapeau2014}.
By $x_t,x_{t+1},\ldots,x_T$ we denote a future cumulative cash-flow stream starting from a given future time $t$.
We are interested in the study of an investor's assessment $\succcurlyeq$ of these future cash-flows but conditioned on the available information at this time $\mathcal{A}:=\mathcal{A}_t$.
The information at time $s$ is denoted by $\mathcal{A}_s$ and it holds $\mathcal{A}_s\subseteq \mathcal{A}_{s+1}$ for every $s\geq t$.
The cash-flow $x_s$ is adapted to $\mathcal{A}_s$ and is square integrable, that is $E[ \abs{x_s}^2 | \mathcal{A}_t ]<\infty$.
We denote by
\begin{equation*}
    L^2(\mathcal{A}_s,s\geq t):=\Set{x=(x_t,\ldots,x_T)\colon x_s \in L^2(\mathcal{A}_s), s=t,\ldots,T}
\end{equation*}
which is a conditionally reflexive Hilbert space, see \citep{djkk2013,fillipovic2012}.
We denote by 
\begin{itemize}
    \item $\Delta$ the set of probability measures $Q$ absolutely continuous with respect to the reference measure;
    \item $\mathcal{D}$ the set of discounting factors, that is those processes $D=(D_t,\ldots,D_T)$ where $1=D_t\geqslant D_{t+1}\geqslant\ldots\geqslant D_T\geqslant 0$ where $ D_s$ is $\mathcal{A}_{s-1}$ adapted.
\end{itemize}
It follows that
\begin{equation*}
    E_Q\left[ \sum_{k=t}^T  D_k (x_k-x_{k-1}) \Mid \mathcal{A}_t \right]
\end{equation*}
is the expected discounted value of the cash-flow stream $x_{k}-x_{k-1}$ for the discount factor $ D \in \mathcal{D}$ under the probability model $Q \in \Delta$.
For reasons discussed in \citep{drapeau2014}, we denote by $Q\otimes  D \in \Delta\otimes \mathcal{D}$ the set of those $Q, D$ with some $L^2$-integrability conditions.
\begin{proposition}\label{rep:propproc}
    Let $\succcurlyeq$ be a conditional preference order on $L^(\mathcal{A}_s,s\geq t)$.
    Suppose that it is
    \begin{itemize}
        \item \emph{convex:} $x\succcurlyeq y$ implies $\alpha x+(1-\alpha)y\succcurlyeq y$ for every conditional real number $0\leqslant \alpha \leqslant 1$;
        \item \emph{monotone:} $x\succcurlyeq y$ whenever $x_s\geqslant y_s$ for every $s\geq t$;
        \item \emph{upper Archimedean:} if $x\succcurlyeq y\succ z$, then there exists some conditional real number $0<\alpha <1$ such that $y\succ \alpha x+(1-\alpha)z$.
    \end{itemize}
    Then, $\succcurlyeq$ is upper semi-continuous and admits an upper-semi continuous numerical representation $U$ with robust representation
    \begin{equation}\label{rep:robust02}
        U(x)=\inf_{Q\otimes  D \in \Delta\otimes \mathcal{D}}R\left( Q\otimes  D; E_Q\left[ \sum_{k=t}^T  D_k (x_k-x_{k-1}) \Mid \mathcal{A}_t \right] \right)
    \end{equation}
    for a unique minimal risk function $R:\Delta\otimes \mathcal{D}\times \mathbf{R}\to [-\infty,\infty]$.
\end{proposition}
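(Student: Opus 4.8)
The plan is to obtain the statement by chaining together three results already at hand: the automatic continuity of Proposition~\ref{prop:aut:cont}, the weak${}^\ast$ representation of Proposition~\ref{thm:repweak}, and a conditional robust representation of quasiconvex monotone functionals. First I would observe that $L^2(\mathcal{A}_s,s\geq t)$ is a conditionally reflexive Hilbert space, in particular a conditional Banach space, so that Proposition~\ref{prop:aut:cont} is applicable. Its hypotheses hold here: convexity, monotonicity and the upper Archimedean axiom are assumed verbatim, while the conditional completeness it requires is furnished by local completeness together with the partition of Lemma~\ref{lem:partition} into the events where $x\sim y$, $x\succ y$ and $y\succ x$. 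Hence $\succcurlyeq$ is conditionally upper semi-continuous for the norm topology.

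Next I would pass to a weak${}^\ast$ representation. The upper contour sets $\mathcal{U}(x)=\set{y\in X\colon y\succcurlyeq x}$ are conditionally convex by the convexity axiom and conditionally norm-closed by upper semi-continuity, hence conditionally weakly closed by the conditional Hahn--Banach separation available on a conditional Banach space; thus $\succcurlyeq$ is in fact conditionally upper semi-continuous for the weak topology. As a conditionally reflexive and conditionally separable Hilbert space (as for the $L^p$-examples), $X$ is conditionally self-dual, so the weak and weak${}^\ast$ topologies agree and we are in the setting of Proposition~\ref{thm:repweak}, which then yields a conditionally weak${}^\ast$ upper semi-continuous numerical representation $U\colon X\to\mathbf{R}$. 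The remaining two axioms transfer to $U$: convexity of $\succcurlyeq$ makes $U$ conditionally quasiconvex and monotonicity of $\succcurlyeq$ makes $U$ conditionally monotone.

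Finally I would invoke the conditional robust representation of quasiconvex monotone maps from \citep{marinacci03,drapeau01} in the conditional form of \citep[Theorem~2.12 and Remark~2.13]{drapeau2014}. The delicate point, and the main obstacle, is the identification of the correct conditional dual pairing: instead of pairing a cash-flow stream $x=(x_t,\ldots,x_T)$ with a single probability measure, the relevant conditional dual of $L^2(\mathcal{A}_s,s\geq t)$ pairs $x$ with a measure--discount pair $Q\otimes D\in\Delta\otimes\mathcal{D}$ through the expected discounted increment $E_Q[\sum_{k=t}^{T}D_k(x_k-x_{k-1})\mid\mathcal{A}_t]$. Verifying that this is exactly the pairing carrying the weak${}^\ast$ topology, so that the abstract robust representation specializes to \eqref{rep:robust02}, is the process-specific content drawn from \citep{drapeau2014}; the uniqueness of the minimal risk function $R$ is then inherited from Remark~2.13 there.
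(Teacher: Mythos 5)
Your proposal is correct and follows the same skeleton as the paper's own (very terse) argument: since $L^2(\mathcal{A}_s,s\geq t)$ is a conditional Banach space, Proposition \ref{prop:aut:cont} applies verbatim and yields conditional upper semi-continuity of $\succcurlyeq$; one then extracts an upper semi-continuous numerical representation; and the robust form \eqref{rep:robust02} with its unique minimal risk function is delegated to the conditional quasiconvex-monotone duality of \citet{drapeau2014} -- the paper cites Theorem~3.4 there for the process case rather than Theorem~2.12/Remark~2.13, but that is a bookkeeping difference. Your treatment of the conditional completeness hypothesis of Proposition \ref{prop:aut:cont} via Lemma \ref{lem:partition} is exactly the right justification.

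The one place you genuinely diverge is the middle step. The paper obtains the numerical representation directly: $L^2(\mathcal{A}_s,s\geq t)$ is a conditionally separable Hilbert space, hence conditionally second countable for its norm topology, so Theorem \ref{thm:rader} applied to the norm-upper-semi-continuous order already produces an upper semi-continuous representation -- no passage to the weak${}^\ast$ topology and no appeal to Proposition \ref{thm:repweak} is needed. Your detour (conditionally convex upper contour sets, norm-closed implies weakly closed by conditional Hahn--Banach separation, reflexivity to identify weak and weak${}^\ast$, then Proposition \ref{thm:repweak}) is sound but invokes a conditional separation theorem the paper never uses; its payoff is that it makes explicit why the representation is upper semi-continuous for the topology in which the dual pairing with $\Delta\otimes\mathcal{D}$ lives, which the paper leaves implicit when it hands off to \citep[Theorem 3.4]{drapeau2014}. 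Either route closes the argument; the paper's is shorter, yours is more self-documenting about the duality needed for \eqref{rep:robust02}.
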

This representation shows that convex and monotone conditional assessment of future cash flows excerpts a prudent assessment of probability as well as discounting model uncertainty.
If we can ensure the existence of an upper semi-continuous, quasiconvex, conditional and monotone numerical representation $U$, then existence and uniqueness of Representation \ref{rep:robust02} is a consequence of \citep[Theorem 3.4]{drapeau2014}.
However, $L^2(\mathcal{A}_s, s\geq t)$ being a Banach space, it follows that the assumption of the propositions fulfills the assumptions of Proposition \ref{prop:aut:cont} and so we obtain the existence of an upper semi-continuous numerical representation.


\section{Conditional Gap Lemma}\label{sec:gaptheorem}

For $s,t \in \mathbf{R}$ with $s\leqslant t$ we denote $[s,t]=\set{u \in \mathbf{R}: s\leqslant u\leqslant t}$ and if $s<t$ we denote $[s,t[=\set{u \in \mathbf{R}: s\leqslant u<t}$, $]s,t]=\set{u \in \mathbf{R}: s<u\leqslant t}$, and $]s,t[=\set{u \in \mathbf{R}: s < u<t}$, all conditionally convex subsets of $\mathbf{R}$ which live on $\Omega$. 
In the conditional topology of $\mathbf{R}$ the conditionally convex subset $[s,t]$ is conditionally closed whereas $]s,t[$ is conditionally open.
A conditionally convex subset $I\sqsubseteq \mathbf{R}$ is an \emph{interval}.
Denoting by $s =\inf I$ and $t=\sup I$, an interval is generically denoted by $(s,t)$.\footnote{It is possible that $s,t$ attain $\pm \infty$ on some positive condition.}
Up to conditioning, all conditionally convex subsets of $\mathbf{R}$ are characterized as conditional intervals. 
If we assume that an interval lives on $\Omega$, convexity yields
\begin{equation}\label{eq:condint}
    (s,t)=[s,t]|A+[s,t[|B+ ]s,t]|C+]s,t[|D
\end{equation}
where $A=E\cap F$, $B=E\cap F^c$, $C=E^c\cap F$, $D=E^c\cup F^c$ and $E =\cup\set{\tilde{E}: s|\tilde{E} \in I|\tilde{E}}$ and $F=\cup\set{\tilde{F}: t|\tilde{F} \in I|\tilde{F}}$, as illustrated in Figure \ref{pic:gap}.
\begin{figure}[h]
    \centering
    \includegraphics[scale=0.12]{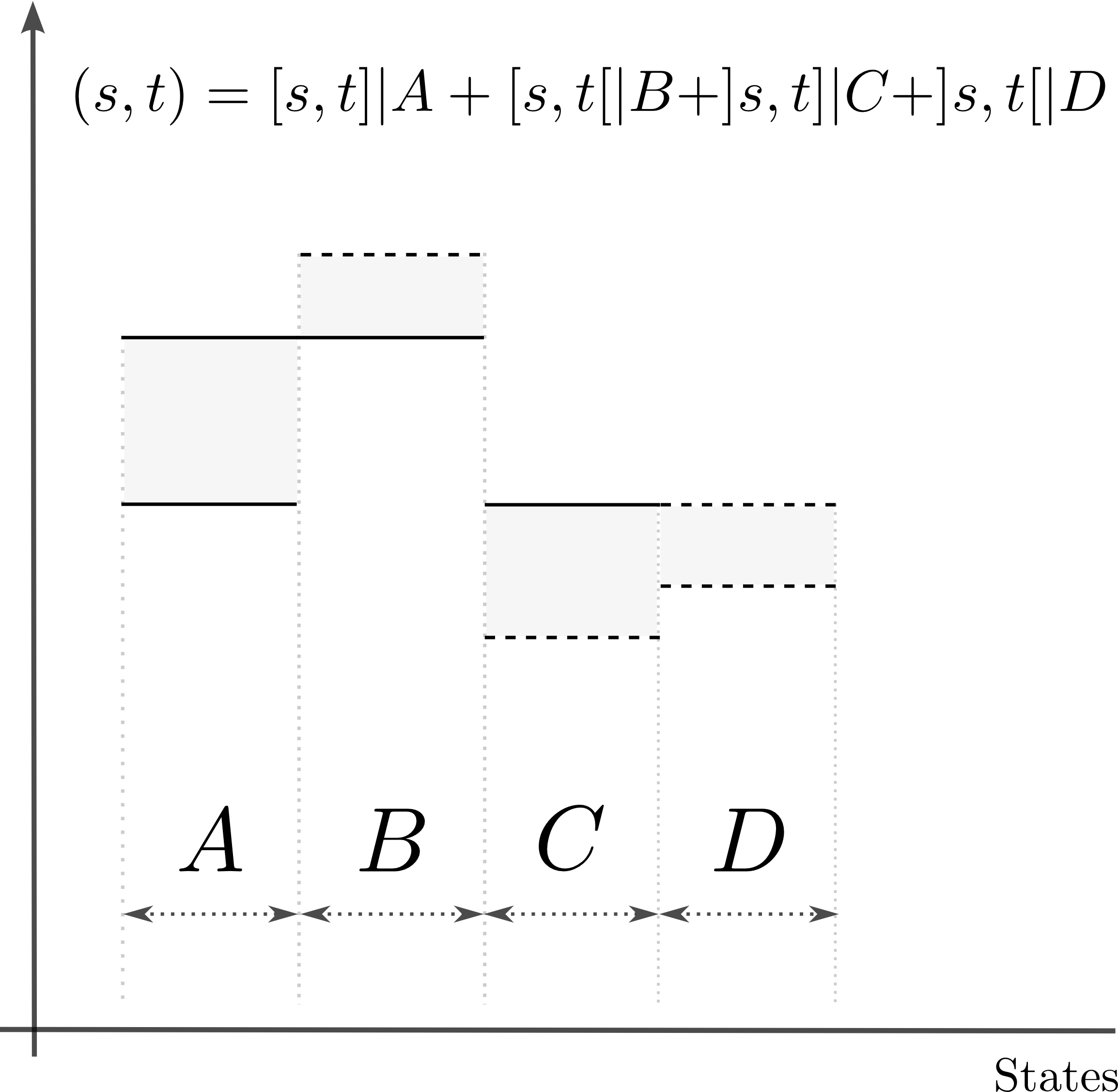}
    \caption{Illustration of a Gap.}
    \label{pic:gap}
\end{figure}

Let $S\sqsubseteq \mathbf{R}$ and $(s,t)$ be an interval\footnote{May live on an event $A$ strictly smaller than $\Omega$.} such that $(s,t)\sqsubseteq S^\sqsubset$.
Inspection shows that there exists a unique maximal interval $(s^\ast,t^\ast)$ with respect to conditional inclusion such that $(s,t)\sqsubseteq (s^\ast,t^\ast)\sqsubseteq S^\sqsubset$.
Such a maximal interval with $\inf S^\sqsubset< s^\ast \leqslant t^\ast < \sup S^\sqsubset$ on the conditions where $s^\ast,t^\ast$ are living is called a \emph{conditional gap} of $S$.
Note that any conditional gap $(s,t)$ of $S$ can be decomposed into
\begin{equation*}
    (s,t)=\set{s}|A+(s,t)|B
\end{equation*}
where $A\cap B=\emptyset$ and $s<t$ on $B$, that is the conditional interior of $(s,t)$ lives on $B$.
Moreover, the family of conditional gaps of $S$ is itself stable and therefore each of the conditional gaps of $S$ lives on the same condition.
Indeed, suppose that two conditional gaps $(s_1,t_1)$ and $(s_2,t_2)$ live on events $A$ and $B$, respectively, and $A\subseteq B$ with $A\neq B$.
Then it follows that
\begin{equation*}
    (s_1,t_1)=(s_1,t_1)|A\sqsubseteq  (s_1,t_1)|A+(s_2,t_2)|B\cap A^c \sqsubseteq S^\sqsubset
\end{equation*}
contradicting the maximality of $(s_1,t_1)$.
Hence $A=B$.
\begin{lemma}\label{lem:orderdense}
    The conditionally complete order $\geqslant$ restricted to any $S\sqsubseteq \mathbf{R}$ living on $\Omega$ admits a conditionally countable order dense subset. 
\end{lemma}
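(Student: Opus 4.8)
The plan is to transpose Debreu's classical construction to the conditional setting: a conditionally countable order dense subset of $S$ is assembled from the endpoints of the conditional gaps of $S$ together with a conditionally topologically dense subset. The only genuinely new ingredient is that every selection made across the state space must be carried out coherently, which is where the conditional axiom of choice replaces classical countable choice and measurable selection arguments.

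First I would enumerate the conditional gaps. As recorded above, each conditional gap $(s,t)$ decomposes as $\set{s}|A+(s,t)|B$ with $s<t$ on $B$, and on $B$ its conditional interior $]s,t[$ contains, by conditional density of $\mathbf{Q}$ in $\mathbf{R}$, a conditional rational $q\in\mathbf{Q}$ with $s<q<t$. Since distinct conditional gaps are conditionally disjoint, this assignment is a conditional injection of the family of gaps with non-degenerate interior into $\mathbf{Q}$, so this family is conditionally countable and may be written as a conditional sequence $((s_n,t_n))_{n\in\mathbf{N}}$. Let $D_1\sqsubseteq S$ be the conditional set of those endpoints $s_n,t_n$ that belong to $S$; it is conditionally countable. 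Next, since $\mathbf{R}$ is conditionally second countable, the conditional subset $S$ is conditionally separable, so by the conditional axiom of choice I select a conditionally countable conditionally dense subset $D_2\sqsubseteq S$. Finally set $D:=D_1\sqcup D_2$, which is again conditionally countable.

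It remains to verify that $D$ is conditionally order dense. Fix $x,y\in S$ with $x\succ y$; as the order is the almost sure order restricted to $S$ living on $\Omega$, this means $y<x$ on $\Omega$. Let $B:=\cup\set{\tilde B\colon \text{there is } s\in S \text{ with } y<s<x \text{ on } \tilde B}$ be the largest event carrying an intermediate point of $S$. On $B$, the relatively conditionally open set $S\cap\,]y,x[$ is conditionally non-empty, so conditional density yields $z_1\in D_2$ with $y<z_1<x$ on $B$. On $B^c$ we have $]y,x[\,\cap S=\emptyset$ by maximality of $B$, hence $]y,x[\,\sqsubseteq S^\sqsubset$ there; since $y,x\in S$ the interval cannot be enlarged past either endpoint, so $]y,x[\,|B^c$ is a conditional gap of $S$ with both endpoints in $S$, whence $y|B^c\in D_1|B^c$. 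By stability of $D$, the concatenation $z:=z_1|B+y|B^c$ lies in $D$ and satisfies $x\succcurlyeq z\succcurlyeq y$.

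The hard part is precisely the coherence of these selections: unlike the classical argument, where one enumerates gaps and picks dense points with countable choice, here one must guarantee that the chosen rationals, the dense subset $D_2$, and the intermediate points $z_1$ vary consistently under conditioning, so that the resulting objects are genuine conditional sets and the final $z$ is a bona fide element of $D$. I expect this to be handled by the conditional axiom of choice together with the amalgamation (stability) of the family of gaps established above, thereby avoiding any measurable selection. A secondary technical point will be to confirm that $]y,x[\,|B^c$ is maximal, hence an actual conditional gap, which follows from $y,x\in S$ combined with the maximality defining $B$.
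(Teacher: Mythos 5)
Your proposal is correct and follows essentially the same route as the paper: a conditionally countable dense subset of $S$ obtained from the conditional base via the conditional axiom of choice, augmented by the endpoints of the non-degenerate conditional gaps (the paper's ``predecessor--successor'' pairs), whose family is shown conditionally countable by injecting a chosen conditional rational from each open gap interior into $\mathbf{Q}$; the order-density check by splitting on the maximal condition $B$ carrying an intermediate point is also the paper's argument. The only cosmetic difference is that the paper adjoins just the left endpoints $s_i$ rather than both endpoints, which changes nothing.
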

\begin{proof}
    Analogous to conditional gaps, we define a \emph{predecessor-successor} as a maximal interval $(s,t)\sqsubseteq S^\sqsubset$ but under the additional requirement that $s<t$.
    In other words, these are conditionally maximal non-trivial conditional gaps.
    Alike conditional gaps, predecessor-successor pairs of $S$ form a conditional family and therefore all live on the same condition.
    This condition is per definition smaller than the one on which the conditional gaps live.

    Now, up to conditioning, we may assume that the conditional gaps of $S$ are all living on $\Omega$.\footnote{On the condition where none of the gaps lives, it holds $S=\mathbf{R}$ for which $\mathbf{Q}$ is a conditionally countable order dense subset.}
    Since $B_{1/m}(q)$ for $m\in \mathbf{N}$ and $q \in \mathbf{Q}$ is a conditionally countable base of the topology, the family of those intersections $B_{1/m}(q)\sqcap S$ living on $\Omega$ is a conditionally countable family which we denote $(U_n)$.
    By means of the conditional axiom of choice, see \citep[Theorem 2.17]{djkk2013}, there exists a conditionally countable family $(u_n)$ such that $u_n \in U_n$ for all $n$.
    Let further $A$ be the condition on which the conditional family $(s_i,t_i)$ of the predecessors-successors of $S$ is living.
    It follows that $U=(u_n)\sqcup (s_i)$ is a conditionally countable order dense subset of $S$ living on $\Omega$.
    Indeed, let $s<t$ for $s,t \in S$ and $B$ the condition on which $(s,t)$ is a predecessor-successor pair, that is, the maximal condition such that $s$ is an element of $(s_i)$.
    It follows that there exists $v \in S$ such that $s< v< t$ on $B^c$.
    Hence, we may find $q \in \mathbf{Q}$ and $n \in \mathbf{N}$ such that $s< q-1/m<v<q+1/m<t$ on $B^c$ which ensures the existence of some $u_n$ in the family $(u_n)$ such that $s<u_n<t$ on $B^c$.
    It follows that $u=s|B+ u_n|B^c \in U$ and $s\leqslant u\leqslant t$.

    We are then left to show that $U$ is conditionally countable.
    Since $(u_n)$ is conditionally countable, according to \citep[Lemma 2.33]{djkk2013}, it is enough to show that the conditional family of open sets $]s_i,t_i[=]s_i,t_i[_{i\in I}$, where $(s_i,t_i)$ is the conditional collection of predecessor-successors, is conditionally countable.
    Without loss of generality, suppose that this family lives on $\Omega$.
    For any two $]s_i,t_i[$ and $]s_j,t_j[$ such that $s_i\neq s_j$ on any non-empty condition, it follows that $]s_i,t_i[\sqcap ]s_j,t_j[=\mathbf{R}|\emptyset$.
    This provides a conditionally pairwise disjoint family of conditionally open sets on $\Omega$.
    By means of the conditional axiom of choice, \citep[Theorem 2.17]{djkk2013}, we choose a conditional family $(q_i)$ of elements of $\mathbf{Q}$ such that $q_i \in ]s_i,t_i[$ for every $i$.
    For $P=\sqcup \set{q_i}\sqsubseteq\mathbf{Q}$, define $f:I\to P$, $i\mapsto q_i$.
    This function is a well-defined conditional function.
    Indeed, for $q_i=q_j$, it follows that $q_i \in ]s_i,t_i[\sqcap ]s_j,t_j[$.
    Both being conditional gaps of $S$, this implies that $]s_i,t_i[=]s_j,t_j[$ and therefore $i=j$.
    This also shows that $f$ is a conditional injection, thus $I$ is at most conditionally countable. 
\end{proof}

\begin{theorem}[Debreu's Gap Lemma]\label{gaptheorem}
    For every $S\sqsubseteq \mathbf{R}$ there exists a conditionally strictly increasing function $g:S \to \mathbf{R}$ such that all the conditional gaps $(s,t)$ of $g(S)$ are of the form
    \begin{equation*}
    (s,t)=\set{s}|A+]s,t[|B.
    \end{equation*}
\end{theorem}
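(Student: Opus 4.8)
The guiding principle is that the target normal form $\set{s}|A+\,]s,t[\,|B$ is exactly what one reads off the decomposition \eqref{eq:condint} once every conditional gap is brought into one of two states: collapsed to a single missing point wherever an endpoint \emph{lies in the gap} (so is a non-attained conditional limit of $S$), and kept a genuine open interval wherever \emph{both} endpoints belong to $S$. Accordingly I would first produce any conditionally strictly increasing embedding of $S$ into $\mathbf{R}$ and then repair the defective gaps it can leave behind.

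I would begin by reducing to the condition carrying the conditional gaps of $S$: on its complement $S$ has no gap, hence $\mathbf{Q}$ is conditionally order dense in $S$ and the conditional inclusion already works; I relabel the remaining condition $\Omega$. There the conditional gaps form a conditionally countable family $(J_n)_{n\in\mathbf{N}}$: their non-degenerate parts are conditionally pairwise disjoint conditionally open subsets of $\mathbf{R}$, so selecting a conditional rational in each by the conditional axiom of choice \citep[Theorem 2.17]{djkk2013} gives a conditional injection into $\mathbf{Q}$, whence conditional countability by \citep[Lemma 2.33]{djkk2013}, precisely as in the proof of Lemma \ref{lem:orderdense}.

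Applying the construction from the proof of Theorem \ref{thm:numrep01} to the conditionally countable order dense subset supplied by Lemma \ref{lem:orderdense} gives a conditionally strictly increasing conditional function $h:S\to\mathbf{R}$. Writing $J_n=(s_n,t_n)$ and splitting each gap by \eqref{eq:condint} into its four endpoint types on pairwise disjoint events, a direct inspection of $h(S)$ shows the following. On the event where both endpoints lie in the gap, the two one-sided conditional limits of $h$ at $J_n$ coincide and $J_n$ maps to a degenerate gap $\set{s}|A_n$; on the event where both endpoints belong to $S$, its image is an open gap $]s,t[\,|D_n$; and of the two mixed types one is already collapsed by $h$ while the other leaves a spurious nondegenerate half-open gap carrying a conditional jump whose size is the weight that the attained endpoint contributes to $h$.

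Finally I would subtract these spurious jumps. They are indexed by a conditionally countable subfamily and their sizes are dominated by the conditionally summable family of weights used to build $h$, so the correction $c$ that accumulates, below $x$, the weight of each attained endpoint of a defective gap is a conditionally finite conditional function; then $g:=h-c$ collapses the remaining defective gaps and leaves the open and degenerate ones untouched. Matching the result against \eqref{eq:condint} term by term yields that every conditional gap of $g(S)$ has the asserted form $\set{s}|A+\,]s,t[\,|B$. I expect the crux to be carrying the third and fourth steps out \emph{conditionally}: because a single conditional gap simultaneously realises several endpoint types on different events, both the case analysis and the collapsing correction must be performed event-wise and reassembled by stability, and one must re-verify that $g=h-c$ stays conditionally strictly increasing — which is where the strict positivity of the weights, the conditional order completeness of $\mathbf{R}$, and the fact that a collapsed endpoint never belongs to the order dense spine must be combined with care.
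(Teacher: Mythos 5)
Your strategy --- embed $S$ by the measure construction of Theorem \ref{thm:numrep01} and then subtract the spurious jumps --- is a genuinely different route from the paper, which instead builds a function $f$ on the conditionally countable order dense subset of Lemma \ref{lem:orderdense} by a recursive midpoint formula (via a conditional Zorn argument), proves that $f$ creates no gap across any cut $(V,W)$ of its domain unless $V$ has a maximum and $W$ a minimum, and then extends $f$ to $S$ by suprema; the desired form of the gaps falls out of that cut property rather than from any post hoc repair. Your route is the classical ``saltus subtraction'' argument and is in principle viable, but as written it has a real hole exactly at the point you yourself flag as the crux: the conditional strict monotonicity of $g=h-c$. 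This is not a routine verification. For $x<y$ in $S$, the increment $h(y)-h(x)$ equals (up to endpoint terms) $2\mu(\set{w})$ summed over $w\in Z\cap\,]x,y[$, while each such $w$ can be the attained endpoint of \emph{two} defective gaps, one on each side, each charged $\mu(\set{w})$ by your correction; a priori the correction can therefore consume the entire increment. One must argue, using order density of $Z$, that between any two points of $S$ some mass always survives uncancelled (e.g.\ because the non-attained endpoint of a defective gap is a conditional limit of $S$, forcing additional $Z$-points nearby that are not doubly charged). Without that argument the lemma is not proved, and in the conditional setting this estimate must moreover be carried out event-wise over the decomposition \eqref{eq:condint} and reassembled by stability.

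A second, more local error: your case analysis of the image gaps under $h=\mu(Z^-(\cdot))-\mu(Z^+(\cdot))$ is not correct. The two mixed types are symmetric, not asymmetric as you claim: a gap $[s,t[$ of $S$ (with $t\in S$ attained) produces an image jump of size $\mu(\set{t})$ precisely when $t\in Z$, and a gap $]s,t]$ produces a jump $\mu(\set{s})$ precisely when $s\in Z$; either type collapses already under $h$ exactly when its attained endpoint misses $Z$, and is defective otherwise. So ``defectiveness'' is itself a conditional property depending on membership of the attained endpoint in $Z$, which further complicates the definition of the conditional correction $c$ as a conditional function and the summability of the conditional family of jumps. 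None of this is fatal to the approach, but the proposal in its present form establishes neither the key monotonicity estimate nor the correct bookkeeping of which gaps must be repaired.
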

This theorem says that there exists a strictly increasing transformation of $S$ such that any conditional gap which is of the form \eqref{eq:condint} is transformed in a gap which conditionally is either empty, a singleton or an open set.
The following argumentation follows the proof idea in \citep{ouwehand2010}.
\begin{proof}
    \begin{enumerate}[label=\textit{Step \arabic*:},fullwidth]
        \item According to Lemma \ref{lem:orderdense}, let $U=\{u_n\colon n\in\mathbf{N}\}$ be a conditionally countable order dense subset of $S$.
            We construct a conditionally increasing function $f: U\to [0,1]$.
            Let $H$ be the set of conditional functions $f:V\to [0,1]$, where $V=\set{u_k: 1\leqslant k\leqslant n}$, $n\in \mathbf{N}$ or $V=U$ and such that\footnote{With the classical convention that the infimum and supremum over emptyset is equal to $1$ and $0$, respectively. That is, $f(u_{k})=\inf_{l\leqslant k-1} f(u_l)/2$ on the condition where $u_{k}< u_l$ for every $l\leqslant k-1$ and $f(u_{k})=(\sup_{l\leqslant k-1} f(u_l)+1)/2$ on the condition where $u_{l}< u_k$ for every $k-1\leqslant l$.}
            \begin{align*}
                    f(u_1)&=1/2,\\\\
                    \displaystyle f(u_k)&=\frac{\displaystyle \sup_{l\leqslant k-1} \Set{f(u_l): u_l <u_{k}}+\inf_{l\leqslant k-1}\Set{f(u_l): u_{k} < u_l}}{2},& k\geqslant 2.
            \end{align*}
            By definition, any $f \in H$ is conditionally strictly increasing on its domain and $H$ is a conditional set.
            Furthermore $f:\set{u_1}\to [0,1]$ with $f(u_1)=1/2$ is an element of $H$ so that $H$ lives on $\Omega$.
            We show that there exists a function $f \in H$ with domain $U$.
            For $f:V\to [0,1]$ and $g:W\to [0,1]$ in $H$, define $f\preccurlyeq g$ if $V\sqsubseteq W$ and $f=g$ restricted on $V$.
            Let now $(f_i)$ be a chain in $H$ and define $f:V:=\sqcup V_i\to [0,1]$, $u=\sum u_j|A_j\mapsto f(u)=\sum f_j(u_j)|A_j$ where $u_j\in V_j$ for every $j$ is a well-defined conditional function in $H$.
            Indeed, $H$ is a conditional set, the $f_i$ are restrictions of each others and $V_i\sqsubseteq V_j$ if $f_i\preccurlyeq f_j$.
            By Zorn's Lemma, there exists a maximal function $f \in H$, $f:V\to [0,1]$.
            Next we show that $V=U$.
            For the sake of contradiction, suppose that $V=\set{u_k: 1\leqslant k\leqslant n}$ for some $n \in \mathbf{N}$ on some non-trivial event $A$.
            Without loss of generality, assume that $A=\Omega$.
            Define $g:\set{u_k: 1\leqslant k\leqslant n+1}\to [0,1]$ by setting $g=f$ on $\set{u_k: 1\leqslant k\leqslant n}$ and 
            \begin{equation*}
                g(u_{n+1})=\frac{\displaystyle \sup_{l\leqslant n} \Set{f(u_l): u_l <u_{n+1}}+\inf_{l\leqslant n}\Set{f(u_l): u_{n+1} < u_l}}{2}
            \end{equation*}
            As for those $n\leqslant m\leqslant n+1$, it follows that $m=n|A+(n+1)|A^c$ and we set $g(u_m)=g(u_n)|A+g(u_{n+1})|A^c$.
            By construction, $g:\set{u_k: 1\leqslant k\leqslant n+1}\to[0,1]$ is an element of $H$ which coincides on $V$ with $f$.
            Since $g$ is defined on $V\sqcup \set{u_{n+1}}$ it contradicts the maximality of $f$.
            Thus the domain of the maximal function $f\in H$ is $U$. 
            
        \item Let $U=\{u_n\colon n\in \mathbf{N}\}$ and $f:U\to [0,1]$ as defined in the previous step.
            Suppose that $V,W\sqsubseteq U$ living on $\Omega$ satisfy
            \begin{enumerate}[label=(\alph*)]
                \item\label{assumption1} $V\sqcup W=U$,
                \item\label{assumption2} $V\leqslant W$,\footnote{In the sense that for all $s\in V$ and for all $t\in W$ it holds $s \leqslant t$.}
                \item\label{assumption3} $\emptyset$ is the unique condition on which $V$ and $W$ have both a maximum and a minimum, respectively.
            \end{enumerate}
            Then 
            \begin{equation*}
                \sup_{s \in V}f(s) = \inf_{t \in W}f(t).
            \end{equation*}
            By \ref{assumption2} it holds $\sup_{V} f(s) \leqslant \inf_{ W} f(t)$.
            In order to show the reverse inequality, according to \ref{assumption3}, it is sufficient to suppose that $V$ and $W$ have both nowhere a maximum and a minimum, respectively, since then the gap is the largest.
            For the sake of contradiction, up to conditioning, suppose that 
            \begin{equation*}
                \sup_{V}f(s)+ \varepsilon < \inf_{W} f(t)  
            \end{equation*}
            for some $\varepsilon>0$.
            Choose $s_0=u_m\in V$ and $t_0=u_n\in W$ such that 
            \begin{equation}\label{eq:eq01}
                \sup_Vf(s) - \varepsilon \leqslant f(s_0) \leqslant \sup_Vf(s) \quad \text{and} \quad 
                \inf_Wf(t) \leqslant f(t_0) \leqslant \inf_Wf(t) + \varepsilon.
            \end{equation}
            Since $V$ has nowhere a maximum, there exists $u_k\in V$ such that $u_m< u_k< u_n$ and $k>n,m$.
            Let $k^\prime=\min\set{k>n,m: u_n<u_k<u_m}$. 
            By construction of $f$ and since \ref{assumption1} holds, it follows that
            \begin{equation*}
                f(u_{k^\prime})=\frac{f(u_n)+f(u_m)}{2}=\frac{f(s_0)+ f(t_0)}{2}.
            \end{equation*}
            Adding both inequalities in \eqref{eq:eq01} yields
            \begin{equation*}
                \frac{\sup_Vf(s) + \inf_Wf(t)}{2} -\frac{\varepsilon}{2} \leqslant f(u_{k^\prime}) \leqslant \frac{\sup_Vf(s) + \inf_Wf(t)}{2} +\frac{\varepsilon}{2},
            \end{equation*}
            and therefore $\sup_Vf(s)<f(u_{k^\prime})<\inf_Wf(t)$ contradicting $u_{k^\prime} \in V$.
        \item Define $g:S\to \mathbf{R}$ by $g(s)=\sup_{u\in U, u\leqslant s} f(u)$.
            By construction, $g$ is a conditionally strictly increasing extension of $f$ since $U$ is a conditionally countable order dense subset of $S$.
            Let $(s,t)$ be a conditional gap of $g(S)$ and $A$ be the maximal event such that $s<t$, that is $(s,t)=(s,t)|A+\set{s}|B$.
            Without loss of generality, suppose that $A=\Omega$ and $B=\emptyset$.
            Define $V=\set{u\in U: f(u)\leqslant s}$ and $W=\set{u\in U: f(u) \geqslant t}$.
            Since $s < t$ and $V$, $W$ satisfy \ref{assumption1} and \ref{assumption2} of the previous step, \ref{assumption3} has to be violated.
            Hence $V$ and $W$ have both a conditional maximum and minimum respectively on some maximal non-empty event $C$, that is, $s=f(u_n)$ and $t=f(u_m)$ on $C$ for some $n,m$.
            Thus $s|C, t|C\in g(S)|C$ showing that $(s,t)|C=]s,t[|C$.
            If $C$ is non equal to $\Omega$, we follow the same argumentation but conditioned on $C^c$ which yields a contradiction with the maximality of $C$.
            Therefore, $(s,t)=]s,t[$ which ends the proof.
    \end{enumerate}
\end{proof} 
\begin{theorem}\label{thm:debreu}
    Any numerically representable conditionally upper semi-continuous preference order admits a conditionally upper semi-continuous numerical representation.
\end{theorem}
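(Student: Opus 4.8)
The plan is to start from any conditional numerical representation, which exists by hypothesis, and to repair its image by the Gap Lemma so that the transformed representation has conditionally closed superlevel sets. First I would fix a conditional numerical representation $U:X\to\mathbf{R}$ and set $S:=\mathrm{Im}(U)\sqsubseteq\mathbf{R}$. Applying Theorem \ref{gaptheorem} to $S$ produces a conditionally strictly increasing function $g:S\to\mathbf{R}$ such that every conditional gap of $g(S)$ is of the form $\set{s}|A+]s,t[|B$. I then define $V:=g\circ U$. Being a composition of conditional functions, $V$ is a conditional function, and since $g$ is conditionally strictly increasing it is both order-preserving and order-reflecting, so that $x\succcurlyeq y$ if and only if $U(x)\geqslant U(y)$ if and only if $V(x)\geqslant V(y)$. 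Hence $V$ is again a conditional numerical representation of $\succcurlyeq$, with $\mathrm{Im}(V)=g(S)$.

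It then remains to verify that $V$ is conditionally upper semi-continuous, i.e.\ that $\set{x:V(x)\geqslant m}$ is conditionally closed for every $m\in\mathbf{R}$. The base observation is that whenever $m\in\mathrm{Im}(V)=g(S)$, the set $\set{x:V(x)\geqslant m}$ is exactly the upper contour set $\mathcal{U}(x_0)=\set{y:y\succcurlyeq x_0}$ for any $x_0$ with $V(x_0)=m$, hence conditionally closed by the assumption that $\succcurlyeq$ is conditionally upper semi-continuous. The Gap Lemma is precisely what controls the remaining levels. Fixing $m$, and arguing up to conditioning, I would partition $\Omega$ according to the position of $m$ relative to $g(S)$. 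On the event where $m<\inf g(S)$ the superlevel set is all of $X$, and on the event where $m>\sup g(S)$ it is conditionally empty, both conditionally closed. On the event where $m$ lies in the conditional interior $]s,t[$ of a gap, the equality $g(S)\sqcap\, ]s,t[=\mathbf{R}|\emptyset$ forces $V(x)\geqslant m$ to be equivalent to $V(x)\geqslant t$; because the gap excludes its right endpoint we have $t\in g(S)$, so $\set{x:V(x)\geqslant m}=\set{x:V(x)\geqslant t}$ is again an upper contour set and hence conditionally closed.

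The one genuinely delicate case is the singleton part $\set{s}$ of a gap, where $m=s\notin g(S)$; here I would use that such a point is a two-sided conditional accumulation point of $g(S)$, so for every $x$ with $V(x)<m$ there is $c\in g(S)$ with $V(x)<c<m$. Consequently $\set{x:V(x)<m}=\bigsqcup_{c\in g(S),\,c<m}\set{x:V(x)<c}$ is a conditional union of the conditionally open complements of upper contour sets, hence conditionally open, so that $\set{x:V(x)\geqslant m}$ is conditionally closed. Finally I would reassemble the pieces: the events above form a pairwise disjoint partition of $\Omega$, and conditional closedness is a local property stable under conditional concatenation, so the glued set $\set{x:V(x)\geqslant m}$ is conditionally closed on all of $\Omega$. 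I expect the main obstacle to be exactly this treatment of the levels lying outside $\mathrm{Im}(V)$: the monotone transformation is routine, but closing up the superlevel sets at gap levels is where the open/singleton form of the gaps supplied by Theorem \ref{gaptheorem} is indispensable, together with the conditional gluing of closed sets. The continuous variant would follow symmetrically by additionally invoking lower contour sets, but for the present statement only the one-sided gap structure is needed.
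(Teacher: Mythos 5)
Your proposal is correct and follows essentially the same route as the paper: compose an arbitrary representation with the map $g$ from the Gap Lemma and check upper semi-continuity of superlevel sets by cases according to whether $m$ lies in the image, in the open part of a gap, or at a singleton gap. The only cosmetic differences are that you treat the levels outside $[\inf g(S),\sup g(S)]$ explicitly and, at a singleton gap, you show the strict sublevel set is conditionally open as a union of open sets, whereas the paper dually writes the superlevel set as a conditional intersection $\sqcap_n\set{x: x\succcurlyeq y_n}$ along a sequence $U(y_n)\nearrow s$.
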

\begin{proof}
    Let $\tilde{U}$ be a numerical representation of a conditionally upper semicontinuous preference order $\succcurlyeq$. 
    According to Debreu's Gap Lemma \ref{gaptheorem} there exists a conditional function $g:Im(\tilde{U})\to \mathbf{R}$ such that all the conditional gaps $(s,t)$ of $g(Im(\tilde{U}))$ are of the following form
    \begin{equation*}
        (s,t)=s|A+]s,t[|B, \quad \text{for }s<t.
    \end{equation*}
    Since $g$ is strictly increasing, it follows that $U=g\circ\tilde{U}$ is a conditional numerical representation of $\succcurlyeq$ as well.
    Clearly $Im(U)=g(Im(\tilde{U}))$.
    In order to verify the upper semi-continuity, pick $m \in \mathbf{R}$.
    We distinguish between the following cases:
    \begin{itemize}
        \item If $m=U(y)$, then $\set{x \in X: U(x)\geqslant m}=\set{x \in X: U(x)\geqslant U(y)}=\set{x \in X: x\succcurlyeq y}$ which is conditionally closed by assumption.
        \item If $m \in ]s,t[$ where $]s,t[$ is a conditional gap of $Im(U)$, then $t=U(y)$ for some $y \in X$, and thus $\set{x \in X: U(x)\geqslant m}=\set{x\in X: U(x)\geqslant t}=\set{x \in X: U(x)\geqslant U(y)}=\set{x \in X: x\succcurlyeq y}$ which is also conditionally closed by assumption.
        \item If $m=s$ where $\set{s}$ is a conditional gap of $Im(U)$, then let $(s_n)=(U(y_n)) \sqsubseteq Im(U)$ be a conditional sequence such that $s_n \nearrow s$.
            It holds $\set{x \in X: U(x)\geqslant s}=\sqcap_n \set{x \in X: U(x)\geqslant s_n}=\sqcap_n\set{x \in X: U(x)\geqslant U(y_n)}=\sqcap_n \set{x \in X: x\succcurlyeq y_n}$ which is conditionally closed as the conditional intersection of closed sets.
    \end{itemize}
    Since $\mathbf{R}=Im(U)\sqcup [Im(U)]^\sqsubset$ and $[Im(U)]^\sqsubset$ is made of gaps of the form $(s,t)=\set{s}|A+]s,t[|B$, it follows that any $m \in \mathbf{R}$ belongs conditionally to one of the previous three cases.
    Thus $U$ is conditionally upper semi-continuous. 
\end{proof}

\begin{appendix}

\section{Technical Proofs}\label{appendix:01}
\subsection{Proof of Theorem \ref{thm:rader}.}
\begin{proof}[Proof of Theorem \ref{thm:rader}]
    Let $\mathcal{O}=(O_n)_{n \in \mathbf{N}}$ be a conditionally countable topological base of $X$ and $\mu$ be a strictly positive measure on $\mathbf{N}$.
    We know that $Z(x):=\mathcal{U}(x)^{\sqsubset}$ is conditionally open for every $x\in X$. 
    Fix some $x\in X$ and let $A$ be the event on which lives $Z(x)$. 
    Then $\set{n\in \mathbf{N}: O_n|A\sqsubseteq Z(x)}$ is a conditional subset of $\mathbf{N}$.
    Next define
    $$
    U(x)= \mu(\set{n\in\mathbf{N}: O_n|A\sqsubseteq Z(x)})|A+ 0|A^c.
    $$
    If $x \succcurlyeq y$, then $U(x)\geqslant U(y)$ since $Z(y)\sqsubseteq Z(x)$.
    Otherwise if $x \succ y$, then $y \in Z(x)$.
    Since $Z(x)$ is conditionally open, there exists a neighborhood $O_{i_0}$ of $y$ such that $O_{i_0} \sqsubseteq Z(x)$.
    However, since $y \in Z(y)^\sqsubset \sqcap O_{i_0}$, it follows that $O_{i_0}$ is nowhere a subset of $Z(y)$.
    Hence, $U(x)\geqslant U(y)+\mu(\set{i_0})>U(y)$.
    By Theorem \ref{thm:debreu} we can choose $U$ to be conditionally upper semi-continuous which ends the proof. 
\end{proof}

\subsection{Proof of Theorem \ref{thm:affine}.}
We will follow the classical proof adapted to the conditional setting.
\begin{lemma}\label{lem1}
    Let $\succcurlyeq$ be a conditionally complete preference order satisfying both the conditional independence and Archimedean axioms. 
    Then the following assertions hold:
    \begin{enumerate}[label=(\roman*)]
        \item\label{prop1} If $x\succ y$, then $\beta x + (1-\beta)y \succ \alpha x + (1-\alpha) y$ for all $0\leqslant \alpha<\beta\leqslant 1$.
        \item\label{prop2} If $x \succ z$ and $x\succcurlyeq y \succcurlyeq z$, then there exists a unique $\alpha \in [0,1]$ with $y\sim \alpha x+(1-\alpha)z$.
        \item\label{prop3} If $x \sim y$, then $\alpha x+(1-\alpha)z\sim \alpha y+(1-\alpha)z$ for all $\alpha\in [0,1]$ and all $z\in \mathcal{M}$.
    \end{enumerate}
\end{lemma}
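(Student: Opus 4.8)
The plan is to transcribe the classical von Neumann--Morgenstern argument into the conditional language, using Lemma~\ref{lem:partition} to replace the classical dichotomy ``$x\succ y$ or $y\succcurlyeq x$'' by a partition $\Omega=A\cup B\cup C$ on whose pieces the comparison is decided, and then reassembling by stability. Throughout I would write $w_t:=tx+(1-t)z$ (or with $y$ in place of $z$) for a conditional real $t\in[0,1]$; every convex-combination identity below is legitimate because $\mathbf{R}$ is a conditional field, so I may divide by $1-\alpha$ whenever $\alpha<1$ holds on all of $\Omega$.

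For \ref{prop1} I would first record the one-sided consequence of the conditional independence axiom: applied to $x\succ y$ with mixing element $x$ and coefficient $1-\alpha\in\,]0,1]$ it gives $x\succ \alpha x+(1-\alpha)y$ for every $\alpha<1$. Then, given $0\leqslant\alpha<\beta\leqslant 1$, set $\lambda=(\beta-\alpha)/(1-\alpha)\in\,]0,1]$; a direct computation shows $\beta x+(1-\beta)y=\lambda x+(1-\lambda)\bigl(\alpha x+(1-\alpha)y\bigr)$, so applying the conditional independence axiom to the strict relation $x\succ \alpha x+(1-\alpha)y$ with mixing element $\alpha x+(1-\alpha)y$ and coefficient $\lambda$ yields $\beta x+(1-\beta)y\succ\alpha x+(1-\alpha)y$.

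For \ref{prop2} I would partition by Lemma~\ref{lem:partition}: on the sub-events where $x\sim y$ or $y\sim z$ one takes $\alpha=1$ or $\alpha=0$, and on the generic piece $x\succ y\succ z$ I would set $\alpha=\sup\{t\in[0,1]:y\succcurlyeq tx+(1-t)z\}$, a conditional supremum that exists because $\mathbf{R}$ is conditionally order complete and that lies in $\,]0,1[\,$ by the conditional Archimedean axiom (which furnishes $w_\lambda\succ y\succ w_\mu$ for some $0<\mu<\lambda<1$). Uniqueness is immediate from \ref{prop1}, as $t\mapsto w_t$ is conditionally strictly $\succ$-increasing. For existence I would rule out both strict alternatives on every non-empty condition: if $y\succ w_\alpha$ on some non-empty $E$, then (since $x\succ y\succ w_\alpha$, hence $x\succ w_\alpha$) the conditional Archimedean axiom produces, conditioned on $E$, a coefficient $\mu$ with $y\succ \mu x+(1-\mu)w_\alpha=w_s$ for some conditional $s>\alpha$, contradicting the supremum; the alternative $w_\alpha\succ y$ contradicts it from the other side.

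The delicate point is \ref{prop3}, the preservation of indifference, and this is where I expect the main obstacle to lie. The difficulty is structural: the conditional independence axiom is stated only for the strict relation, and mixing $x\sim y$ against a common element can only \emph{shrink} the weight carried by $x$ versus $y$ (any attempt to ``un-mix'' back to pure $x\succ y$ would require a coefficient exceeding $1$), so indifference-preservation cannot follow from independence alone. The plan is to reduce \ref{prop3} to the weak monotone form $x\succcurlyeq y\Rightarrow\alpha x+(1-\alpha)z\succcurlyeq\alpha y+(1-\alpha)z$ and apply it to both $x\succcurlyeq y$ and $y\succcurlyeq x$. To prove the weak form I would argue by contradiction and conditional completeness: if $\alpha y+(1-\alpha)z\succ\alpha x+(1-\alpha)z$ on some non-empty condition, then on the sub-event where $x\succ y$ the strict independence axiom gives the opposite strict relation at once; the residual sub-event where $x\sim y$ is the genuine crux, and there the strict gap must be excluded by invoking the conditional Archimedean axiom together with \ref{prop1} and \ref{prop2} to forbid a persistent strict preference between mixtures of indifferent acts. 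This indifference step, together with the systematic bookkeeping of carrying out every comparison only after partitioning via Lemma~\ref{lem:partition} and then concatenating the pieces by stability, is where the bulk of the work will be.
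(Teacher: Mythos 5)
Your plan coincides with the paper's own proof: both reduce each assertion to the classical von Neumann--Morgenstern argument by conditioning --- producing the partition of Lemma \ref{lem:partition} (respectively reducing to the generic case $x\succ y\succ z$ in (ii) via $\alpha=1$ and $\alpha=0$ on the degenerate events), taking exactly the same supremum candidate $\alpha=\sup\set{\beta\in[0,1]\colon y\succcurlyeq \beta x+(1-\beta)z}$, running the classical independence/Archimedean contradiction on each non-trivial piece of the partition, and reassembling by stability. The ``crux'' you flag in (iii) --- excluding a persistent strict preference between mixtures of indifferent acts --- is treated in the paper by precisely the deferral you propose, namely the classical contradiction argument of F\"ollmer--Schied applied conditioned on the events $B$ and $C$ of the partition, so your proposal matches the published proof in both route and level of detail.
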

\begin{proof}
    \begin{enumerate}[label=(\roman*), fullwidth]
        \item Strictly analogous to the classical proof, see for instance \cite[p. 54]{foellmer01}. 
        \item Up to conditioning, we may assume that $x\succ y \succ z$.\footnote{On the condition where $x\sim y$ set $\alpha=1$ and on the condition where $y\sim z$, set $\alpha=0$.}
            The candidate is 
            \begin{equation*}
                \alpha:=\sup\set{\beta\in [0,1]:y\succcurlyeq \beta x + (1-\beta)z}
            \end{equation*} 
            We obtain a partition $A,B,C$ of $\Omega$ such that $y\sim \alpha x+(1-\alpha)z$ on $A$, $y\succ \alpha x+(1-\alpha)z$ on $B$ and $\alpha x+(1-\alpha)z\succ y$ on $C$.
            Conditioned on $B$ and $C$ respectively, we may apply the classical argumentation, see for instance \cite[p. 54]{foellmer01} yielding a contradiction showing that $B=C=\emptyset$ and therefore $A=\Omega$ which ends the proof.
            As for the uniqueness, this is a consequence of the first point.
        \item Let $\alpha \in [0,1]$ and $z \in X$.
            There exists a partition of $A,B,C$ of $\Omega$ such that $\alpha x+(1-\alpha)z\sim \alpha y+(1-\alpha)z$ on $A$, $\alpha x+(1-\alpha)z\succ \alpha y+(1-\alpha)z$ on $B$ and $\alpha x+(1-\alpha)z\succ \alpha y+(1-\alpha)z$ on $C$.
            The same contradiction argumentation as in the classical case, \citep[p. 54--55]{foellmer01}, conditioned on $B$ and $C$, respectively, shows that $B=C=\emptyset$ and therefore $A=\Omega$.
    \end{enumerate}
\end{proof}
\begin{proof}[Theorem \ref{thm:affine}]
    Let $x,y \in X$ be such that $x\succ y$ and define the conditional convex subset $N_{x,y}:=\set{z \in X: x\succcurlyeq z\succcurlyeq y}$.
    For $z\in N_{x,y}$, part \ref{prop2} of Lemma \ref{lem1} yields a unique $\alpha\in[0,1]$ such that $z\sim \alpha x + (1-\alpha)y$.
    Setting $U(z):=\alpha$, $z \in N_{x,y}$ provides a well-defined conditional function from $N_{x,y}$ to $[0,1]$. 
    Indeed, let $[a_i,z_i]\subseteq \mathcal{A}\times N_{x,y}$, and denote $\alpha_i=U(z_i)$ and $\alpha=U(\sum z_i|A_i)$.
    There exists a partition $A,B,C$ of $\Omega$ such that $\alpha= \sum \alpha_i|A_i$ on $A$, $\alpha > \sum  \alpha_i|A_i$ on $B$ and $\sum \alpha_i|A_i >\alpha$ on $C$.
    In particular, $\alpha>\alpha_i$ on $B\cap A_i$ and $\alpha_i>\alpha$ on $C\cap A_i$ for every $i$.
    Hence, if either $B$ or $C$ were non-empty events, this would contradict the uniqueness of some $\alpha_i$ on $B\cap A_i\neq \emptyset$ or $C\cap A_i\neq \emptyset$.
    Hence $B=C=\emptyset$ showing that $A=\Omega$.
    The extension to $X$ follows exactly the same argumentation as the classical case, see \citep[p. 55]{foellmer01}.
\end{proof}

\subsection{Proof of Proposition \ref{thm:repweak}.}
In a conditional topological space $X$ with conditional topological dual $X^\ast$, the conditional absolute polar of a set $O\sqsubseteq X$ living on $\Omega$ is given by
\begin{equation*}
    O^\bullet =\Set{x^\ast \in X^\ast: \abs{\langle x^\ast, x\rangle}\leqslant 1 \text{ for all }x \in O}
\end{equation*}
\begin{lemma}\label{lem:abdeckung}
    Let $X$ be a conditional topological vector space with conditional dual $X^\ast$ and $\mathcal{O}$ a conditional base of neighborhoods of $0$ in $X$.
    Then 
    \begin{equation*}
        X^\ast=\bigsqcup_{O\in \mathcal{O}} O^\bullet.
    \end{equation*}
\end{lemma}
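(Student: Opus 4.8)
The plan is to prove the two conditional inclusions separately, the inclusion $\bigsqcup_{O\in\mathcal{O}}O^\bullet \sqsubseteq X^\ast$ being immediate and the reverse one relying on the conditional continuity of the elements of $X^\ast$.

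First I would record that for each $O\in\mathcal{O}$ the polar $O^\bullet$ is, by its very definition, a conditional subset of $X^\ast$. Since $X^\ast$ is a conditional set and conditional unions of conditional subsets of $X^\ast$ are again conditional subsets of $X^\ast$, it follows at once that $\bigsqcup_{O\in\mathcal{O}}O^\bullet \sqsubseteq X^\ast$.

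For the reverse inclusion I would fix $x^\ast\in X^\ast$ and, up to conditioning, assume that it lives on $\Omega$. The key observation is that $x^\ast:X\to\mathbf{R}$ is a conditionally continuous conditional function, so the preimage $U:=\set{x\in X:\abs{\langle x^\ast,x\rangle}<1}$ of the conditionally open interval $]-1,1[\,\sqsubseteq\mathbf{R}$ is conditionally open; since $\langle x^\ast,0\rangle=0$, the point $0$ belongs to $U$, so that $U$ is a conditional neighborhood of $0$. Because $\mathcal{O}$ is a conditional neighborhood base of $0$, there is $O\in\mathcal{O}$ living on $\Omega$ with $O\sqsubseteq U\sqsubseteq\set{x\in X:\abs{\langle x^\ast,x\rangle}\leqslant 1}$. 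This says precisely that $\abs{\langle x^\ast,x\rangle}\leqslant 1$ for every $x\in O$, that is, $x^\ast\in O^\bullet$, and therefore $x^\ast\in\bigsqcup_{O\in\mathcal{O}}O^\bullet$, which completes the equality.

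The step I expect to be the main obstacle is the passage from a conditional neighborhood $U$ living on $\Omega$ to a \emph{single} base element $O\in\mathcal{O}$, living on $\Omega$, with $O\sqsubseteq U$. Naively the base property only yields, on each piece of a partition of $\Omega$, a base element whose polar captures the corresponding restriction of $x^\ast$, and these pieces must then be glued into one element of $\mathcal{O}$ living on $\Omega$ so that the polar $O^\bullet$ is even defined. This is exactly what the conditional structure affords: the family $\mathcal{O}$ is stable, so by the conditional axiom of choice (\citep[Theorem 2.17]{djkk2013}) the chosen base elements concatenate to a single $O\in\mathcal{O}$ on $\Omega$, and the use of the \emph{conditional} union $\bigsqcup$ rather than an ordinary union in the statement is precisely what makes the resulting covering identity hold without invoking any measurable selection argument.
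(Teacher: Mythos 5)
Your proposal is correct and follows essentially the same route as the paper: both arguments take the preimage under $x^\ast$ of an interval around $0$ (the paper uses $[-1,1]$, you use $]-1,1[$) to produce a conditional neighborhood $V$ of $0$ with $x^\ast\in V^\bullet$, and then pick $O\in\mathcal{O}$ with $O\sqsubseteq V$ so that $x^\ast\in O^\bullet$, the reverse inclusion being immediate. Your closing remark on gluing base elements across a partition via stability of $\mathcal{O}$ is a sound elaboration of what the conditional base property already provides, but it does not change the argument.
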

\begin{proof}
    Let $x^\ast \in X^\ast$, then $V=[x^\ast]^{-1}([-1,1])$ is a conditional neighborhood of $0$.
    In particular, $x^\ast \in V^\bullet$.
    Choose $O\in\mathcal{O}$ such that $O\sqsubseteq V$.
    Then $V^\bullet \sqsubseteq O^\bullet$, and thus $x^\ast \in O^\bullet$.
    The reciprocal is immediate since $O^\bullet \sqsubseteq X^\ast$.
\end{proof}
\begin{proposition}\label{prop:metrizability}
    Let $X$ be a locally convex conditional topological vector space $X$ which is conditionally separable.
    Relative to any conditionally $\sigma(X^\ast, X)$-compact subset $C\sqsubseteq X$, the $\sigma(X^\ast, X)$-topology is conditionally metrizable.
\end{proposition}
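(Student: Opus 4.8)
The plan is to transcribe the classical fact---that on a separable space the weak$^\ast$ topology restricted to a weak$^\ast$-compact set is metrizable---into the conditional framework, the decisive point being a conditional version of ``a continuous bijection from a compact space onto a Hausdorff space is a homeomorphism.'' Throughout I read the conditionally $\sigma(X^\ast,X)$-compact set as a conditional subset $C\sqsubseteq X^\ast$, since $\sigma(X^\ast,X)$ is a conditional topology on $X^\ast$.

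First I would use conditional separability of $X$ to fix a conditionally countable, conditionally dense family $(x_n)_{n\in\mathbf{N}}$ in $X$, and build from it a candidate conditional metric on $X^\ast$ by
\[
d(f,g)=\sum_{n\in\mathbf{N}} 2^{-n}\,\frac{\abs{\langle f-g,x_n\rangle}}{1+\abs{\langle f-g,x_n\rangle}},\qquad f,g\in X^\ast,
\]
where the conditionally countable sum converges, being conditionally dominated by the conditionally summable family $(2^{-n})$ with $2^{-n}$ understood in the conditional sense introduced earlier. I would check that $d$ is a conditional function valued in the conditionally nonnegative reals satisfying the conditional metric axioms: conditional symmetry and the conditional triangle inequality are routine, while conditional definiteness follows because each evaluation $\langle\,\cdot\,,x_n\rangle$ is conditionally $\sigma(X^\ast,X)$-continuous and $(x_n)$ is conditionally dense, so that $\langle f,x_n\rangle=\langle g,x_n\rangle$ for all $n$ already forces $f=g$ by conditional density together with conditional continuity of the pairing.

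Next I would compare the two topologies on $C$. Since every coordinate map $f\mapsto\langle f,x_n\rangle$ is conditionally $\sigma(X^\ast,X)$-continuous and the defining series converges conditionally uniformly, each $d(\,\cdot\,,g)$ is conditionally $\sigma(X^\ast,X)$-continuous; hence the conditional metric balls are conditionally weak$^\ast$-open and the conditional metric topology is conditionally coarser than the restricted weak$^\ast$ topology. Equivalently, the identity $\iota\colon(C,\sigma(X^\ast,X))\to(C,d)$ is a conditionally continuous bijection. By hypothesis $(C,\sigma(X^\ast,X))$ is conditionally compact, and $(C,d)$ is conditionally Hausdorff as a conditionally metric space. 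Invoking the conditional counterparts of ``conditional continuous images of conditionally compact sets are conditionally compact'' and ``conditionally compact subsets of a conditional Hausdorff space are conditionally closed'' from \citep{djkk2013}, $\iota$ carries conditionally closed (hence conditionally compact) subsets to conditionally compact, hence conditionally closed, subsets; thus $\iota$ is conditionally closed, and a conditionally continuous closed bijection is a conditional homeomorphism. Therefore the two topologies agree on $C$, and $\sigma(X^\ast,X)$ is conditionally metrizable on $C$.

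The main obstacle is this last step: ensuring that the conditional compact-to-Hausdorff argument really yields a conditional homeomorphism, which hinges on the stability and localization behaviour of conditional compactness and conditional closedness being exactly as in \citep{djkk2013}. A secondary technical point, to be handled with care, is the conditional convergence and conditional uniform convergence of the series defining $d$, so that $d$ is a bona fide conditional function into $\mathbf{R}$ and each $d(\,\cdot\,,g)$ is conditionally weak$^\ast$-continuous; here the conditionally countable sum must be manipulated with the localization rules for conditional sequences rather than treated as an ordinary series.
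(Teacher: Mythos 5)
Your proposal is correct and follows essentially the same route as the paper: the same series metric $d$ built from a conditionally countable dense family, the observation that $\tau_d$ is conditionally coarser than $\sigma(X^\ast,X)$, and the conditional ``continuous bijection from compact to Hausdorff is a homeomorphism'' step, all of which the paper also uses. The only cosmetic difference is that the paper first reduces to the case $C=O^\bullet$ for a neighborhood $O$ of $0$ (via conditional Banach--Alaoglu and Lemma \ref{lem:abdeckung}) and takes the dense sequence inside $O$, whereas you work directly with an arbitrary conditionally compact $C\sqsubseteq X^\ast$ and a dense sequence in $X$; both work, since each summand is bounded by $2^{-n}$ so the tail estimate is uniform either way.
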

\begin{proof}
    Without loss of generality, by the conditional version of the Banach-Alaoglu Theorem \citep{djkk2013} and the previous lemma, we may assume that $C=O^\bullet$ for some conditional neighborhood $O$ of $0$ in $X$.
    
    First, we construct a conditional distance on $O^\bullet$ as follows.
    Let $(x_n)\sqsubseteq O$ be a conditionally dense sequence in $O$ and define $d:O^\bullet \times O^\bullet\to \mathbf{R}_+$ by
    \begin{equation}
        d(x^\ast,y^\ast)=\sum_{n \in \mathbf{N}} \frac{1}{2^n}\frac{ \abs{ \langle x^\ast-y^\ast,x_n\rangle }}{1+\abs{\langle x^\ast-y^\ast , x_n\rangle}},\quad x^\ast, y^\ast \in O^\bullet.
    \end{equation}
    Straightforward inspection shows that it is a well-defined conditional function and a translation invariant distance on $O^\bullet$.
    Indeed, as a locally convex conditional topological vector space, $X$ separates the points of $X^\ast$ and a fortiori those of $O^\bullet$.
    Furthermore,
    \begin{equation*}
        \Set{x^\ast \in X^\ast: \abs{\langle x^\ast, x_k\rangle }<r, 1\leqslant k\leqslant n }\sqsubseteq \Set{x^\ast \in X^\ast: d(0,x^\ast)<r+2^{-n+1}}=B_{r+2^{-n+1}}(0),
    \end{equation*}
    for every $n\in \mathbf{N}$ and $r>0$.
    This shows that the conditional topology generated by $d$ on $O^\bullet$ is weaker than $\sigma(X^\ast,X)$, that is, $\tau_d\sqsubseteq \sigma(X^\ast,X)$.\footnote{Note that the construction of such a metric can be done similarly on $X^\ast$ by considering a conditionally dense sequence $(x_n)$ of elements in $X$, and the topology induced by $d$ on $X^\ast$ is therefore weaker that $\sigma(X^\ast,X)$.}

    Second, we show that these topologies coincide. 
    To this end, we consider the identity map $Id:(O^\bullet,\sigma(X^\ast,X))\to (O^\bullet ,d)$ which is a bijection.
    Let $(x^\ast_\alpha)\sqsubseteq O^\bullet$ be a conditional net converging in $\sigma(X^\ast,X)$ to $x^\ast\in O^\bullet$.
    For $r>0$, choose $k\in \mathbf{N}$ such that $\sum_{n>k} 2^{-n}< r$.
    Since $x_\alpha^\ast, x^\ast \in O^\bullet$ and $x_n\in O$, it follows that $\abs{\langle x^\ast_\alpha-x^\ast, x_n\rangle} \leqslant 2$ for every $n \in\mathbf{N}$.
    Hence,
    \begin{equation}
        d(x_\alpha,x)\leqslant \sum_{1\leqslant n\leqslant k}\abs{\langle x^\ast_\alpha-x^\ast, x_n\rangle }+2r
    \end{equation}
    Since $\abs{\langle x^\ast_\alpha-x^\ast, x_n\rangle }\to 0$ for every $n \in \mathbf{N}$, it follows that $\limsup d(x_\alpha,x)\leqslant 2r$ for every $r>0$.
    This shows that $Id$ is continuous.
    Now, $(O^\bullet,\sigma(X^\ast,X))$ is conditionally compact due to the conditional version of Banach-Alaoglu and $(C,d)$ is conditionally Hausdorff, it follows that $Id$ is conditionally bi-continuous.\footnote{Every conditionally continuous bijection $f:C\to D$ where $C$ is a conditionally compact and $D$ conditionally Hausdorff is conditionally bi-continuous. Indeed, every conditionally closed set $F\sqsubseteq C$ is conditionally compact and since $f$ is conditionally continuous, it follows that $f(F)$ is conditionally compact, see \citep[Proposition 3.35]{djkk2013}. Moreover, since $D$ is conditionally Hausdorff, it holds $f(F)$ is conditionally closed.}
    Hence, $V\in \tau_d$ for every $V\in \sigma(X^\ast,X)$ showing that $\tau_d=\sigma(X^\ast,X)$ relative to $O^\bullet$.
\end{proof}

\begin{proof}[of Proposition \ref{thm:repweak}]
    Denoting by $(O_n)$ the conditional countable neighborhood of $0$ in $X$, define $\succcurlyeq_n$ as the conditional restriction to $O_n^\bullet$ of $\succcurlyeq$.
    Clearly, $\succcurlyeq_n$ is conditionally $\sigma(X^\ast,X)$-upper semi-continuous for every $n$.
    Furthermore, by the conditional version of Banach-Alaoglu, $O_n^\bullet$ is $\sigma(X^\ast,X)$-compact.
    By Proposition \ref{prop:metrizability}, it follows that $O_n^\bullet$ is conditionally metrizable and compact, hence conditionally second countable.
    Theorem \ref{thm:rader} implies that $\succcurlyeq_n$ is representable and by Theorem \ref{thm:numrep01}, it admits a conditionally countable order dense subset $Z_n\sqsubseteq O_n^\bullet$.
    By means of \citep[Lemma 2.33]{djkk2013}, $Z:=\sqcup Z_n$ is a conditionally countable set.
    By means of Lemma \ref{lem:abdeckung}, straightforward inspection shows that $Z$ is $\succcurlyeq$-conditionally order dense.
    Therefore, once again by means of Theorem \ref{thm:numrep01}, $\succcurlyeq$ admits a conditional numerical representation.
    Theorem \ref{thm:debreu} guarantees that such a conditional numerical representation can be chosen $\sigma(X^\ast, X)$-upper semi-continuous.
\end{proof}

\subsection{Proof of the Automatic Continuity Result.}

\begin{proposition}\label{prop:automatic_continuity}
    Let $X$ be a conditional Fr\'echet lattice and $Z\sqsubseteq X$ be conditionally monotone and convex.
    If $f^{-1}(Z)$ is conditionally closed in $[0,1]$ for every given pair $x,y \in X$, where $f:[0,1]\to X$, $\alpha\mapsto \alpha x+(1-\alpha)y$, then $Z$ is conditionally closed in $X$.
\end{proposition}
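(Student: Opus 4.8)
The plan is to prove that $Z$ is conditionally closed by a conditional sequential argument: a conditional Fréchet lattice is conditionally metrizable, hence conditionally sequential (see \citep{djkk2013}), so it suffices to take a conditional sequence $(z_n)_{n\in\mathbf{N}}$ in $Z$ conditionally converging to some $x\in X$ and to show $x\in Z$. The whole point will be to manufacture out of the increments $z_n-x$ a \emph{single} conditional element $u\in X_+$ dominating all of them simultaneously, so that suitable approximants of $x$ can be forced to lie on one conditional segment, where the hypothesis on $f^{-1}(Z)$ can be invoked.

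First I would pass to a fast-converging conditional subsequence. Describing the conditional topology through a conditionally countable family of conditional Riesz seminorms $(p_j)$, the conditional axiom of choice \citep[Theorem 2.17]{djkk2013} yields a conditional subsequence $(z_{n_k})$ with $p_j(z_{n_k}-x)\leqslant 4^{-k}$ conditionally for all $j\leqslant k$. Then, using the conditional lattice identity $p_j(\abs{a})=p_j(a)$, the series $\sum_k 2^{k}\abs{z_{n_k}-x}$ is conditionally absolutely convergent with respect to each $p_j$, and by conditional completeness it conditionally converges to some $u\in X_+$. Since the conditional partial sums are conditionally increasing, I obtain the key domination $\abs{z_{n_k}-x}\leqslant 2^{-k}u$, equivalently $x-2^{-k}u\leqslant z_{n_k}\leqslant x+2^{-k}u$ in the conditional almost sure order.

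The monotonicity of $Z$ then performs the bridging. Assuming, as in the application to $\mathcal{U}(x_0)$, that $Z$ is conditionally stable under passing to conditionally smaller elements (the opposite orientation being symmetric), from $x-2^{-k}u\leqslant z_{n_k}\in Z$ I conclude $x-2^{-k}u\in Z$ for every $k$. I would then apply the hypothesis to the conditional segment $f:[0,1]\to X$, $f(\alpha)=\alpha x+(1-\alpha)(x-u)=x-(1-\alpha)u$: one has $f(1-2^{-k})=x-2^{-k}u\in Z$, so the conditional reals $\alpha_k:=1-2^{-k}$ all lie in $f^{-1}(Z)$. Since $\alpha_k\nearrow 1$ conditionally and $f^{-1}(Z)$ is conditionally closed in $[0,1]$ by assumption, it follows that $1\in f^{-1}(Z)$, i.e. $x=f(1)\in Z$, which shows that $Z$ is conditionally closed. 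Note that this argument uses only monotonicity together with the segment-closedness of $Z$; conditional convexity is not needed here and enters only to guarantee that the sets $Z$ arising in the applications are themselves conditionally convex.

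I expect the main obstacle to be the second step: realising the conditional subsequence extraction \emph{together with} the uniform domination $\abs{z_{n_k}-x}\leqslant 2^{-k}u$ by one conditional $u$. Conditional subsequences are conditional functions on $\mathbf{N}$ and must be assembled by a conditional choice or diagonal argument rather than a naïve pointwise one, and the conditional absolute convergence of $\sum_k 2^k\abs{z_{n_k}-x}$ has to be extracted carefully from conditional completeness and the seminorm estimates above. Once this single dominating element $u$ is secured, monotonicity and the segment hypothesis combine immediately, and the remaining verifications — conditional sequential characterisation of closedness and conditional closedness of $f^{-1}(Z)\sqsubseteq[0,1]$ — are conditionally routine.
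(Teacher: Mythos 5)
Your proof is correct and is essentially the paper's own argument: both pass to a rapidly convergent subsequence, sum the weighted increments into a single dominating element of $X_+$ (you use $u=\sum 2^k\abs{z_{n_k}-x}$ where the paper uses $y=x+\sum k(x_k-x)^+$), use monotonicity to place the segment points approaching $x$ inside $Z$, and then invoke the conditional closedness of $f^{-1}(Z)$ in $[0,1]$ to conclude $x\in Z$. Your side remark that convexity of $Z$ is not needed here also matches the paper, where convexity only enters in the application (Proposition \ref{prop:aut:cont}) to make $f^{-1}(Z)$ an interval.
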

\begin{proof}
    Denote by $d$ the conditional Fr\'echet distance on $X$, and let $(x_n)$ a conditional sequence of elements in $Z$ conditionally converging to $x \in X$.
    Up to a rapid conditional subsequence, we may suppose that $d(x_n,x)\leq 2^{-n}/n$, $n\in \mathbf{N}$.
    It follows that $\sum_{k\geqslant 1} k (x_k-x)^+$ is conditionally converging.
    Indeed, since the conditional Fr\'echet distance respects the conditional absolute value, it follows for $n< m$
    \begin{multline}
        d\left(\sum_{1\leqslant k \leqslant n} k (x_k-x)^+,\sum_{1\leqslant k\leqslant m} k (x_k-x)^+\right)\leqslant d\left(0,\sum_{n< k\leqslant m} k (x_k-x)\right)\\
        \leqslant \sum_{n<k\leqslant m} kd(x,x_k)\leqslant \sum_{n<k\leqslant m} 2^{-k}\xrightarrow[m,n\to \infty]{}0
    \end{multline}
    Hence, the conditional completeness of $X$ implies that $y=x+\sum_{k\geqslant 1} k (x_k-x)^+$ is well-defined and for each $\alpha \in [0,1[$ it holds 
    \begin{equation*}
        \alpha x+(1-\alpha)y=x+(1-\alpha)\sum_{k\geqslant 1} k(x_k-x)^+\geqslant x+(1-\alpha) n(x_n-x)^+
    \end{equation*}
    for every $n\in\mathbf{N}$.
    Choosing $n\geqslant 1/(1-\alpha)$ yields
    \begin{equation*}
        \alpha x+(1-\alpha)y\geqslant x+(1-\alpha) n(x_n-x)^+\geqslant x+(x_n-x)^+\geqslant x_n \in Z
    \end{equation*}
    By monotonicity of $Z$, it holds $\alpha x+(1-\alpha)y \in Z$.
    Since $n$ can be chosen arbitrarily large, it follows that $[0,1[\sqsubseteq f^{-1}(Z)$ for $\alpha \mapsto x+(1-\alpha)y$, $\alpha \in [0,1]$.
    By assumption, the latter set is conditionally closed in $[0,1]$, therefore $1 \in f^{-1}(Z)$, that is, $x \in Z$ ending the proof.
\end{proof}

\begin{proof}[of Proposition \ref{prop:aut:cont}]
    Fix an $x_0\in X$ and denote by $Z:=\mathcal{U}(x_0)$.
    Then $Z$ is conditionally convex and monotone since $\succcurlyeq$ is so.
    We show that $I:=f^{-1}(Z)$ is conditionally closed in $[0,1]$ where $f:[0,1]\to X$, $\alpha\mapsto \alpha x+(1-\alpha)y$ for any given $x,y \in X$.
    Up to conditioning, we may assume that $I$ lives on $\Omega$ -- in particular $I$ is not conditionally empty\footnote{Recall that the conditional emptyset is conditionally closed.}.
    Since $Z$ is conditionally convex and $f$ conditionally affine, it follows that $I$ is conditionally convex.
    Therefore, $I$ is an interval $(s,t)\sqsubseteq [0,1]$ where $s = \inf I$ and $t=\sup I$.
    If $s=t$, then $I$ is a singleton and therefore is conditionally closed.
    Otherwise, let $s<t$ without of loss of generality. 
    Suppose now, for the sake of contradiction, $s \in I^\sqsubset$.
    That is to say, $\alpha x+(1-\alpha)x_0 \succcurlyeq y\succ sx +(1-s)x_0$ for every $\alpha \in ]s,t[$.
    The one-sided Archimedean axiom yields a $\beta \in ]0,1[$ such that
    \begin{equation}\nonumber
        y\succ \beta (\alpha x+(1-\alpha)z)+(1-\beta)(s x+(1-s)z)=\gamma x+(1-\gamma)z
    \end{equation}
    where $\gamma=\beta \alpha+(1-\beta)s$.
    Since $\beta>0$ and $s<\alpha$, it follows that $\gamma>s$ contradicting the definition of $s$.
    Hence, $sx+(1-s)z \in Z$, and thus $s \in (s,t)$.
    By an analogous argumentation for $t$, it can be concluded that $(s,t)=[s,t]$.
    By Proposition \ref{prop:automatic_continuity}, it holds $Z$ is conditionally closed which ends the proof.
\end{proof}

\end{appendix}
\bibliographystyle{abbrvnat}
\bibliography{bibliography}
\end{document}